\documentclass[preprint,12pt]{elsarticle}

\usepackage{amsmath,amssymb,amsthm}
\usepackage{graphicx}
\usepackage{tikz}
\usetikzlibrary{arrows.meta,positioning}
\usepackage[hidelinks]{hyperref}

\journal{Physica D: Nonlinear Phenomena}

\newtheorem{theorem}{Theorem}[section]
\newtheorem{proposition}[theorem]{Proposition}
\newtheorem{lemma}[theorem]{Lemma}
\newtheorem{corollary}[theorem]{Corollary}
\theoremstyle{definition}
\newtheorem{definition}[theorem]{Definition}

\theoremstyle{remark}
\newtheorem{remark}[theorem]{Remark}

\numberwithin{equation}{section}
\newcommand{\C}{\mathbb{C}}
\newcommand{\Z}{\mathbb{Z}}
\newcommand{\PP}{\mathbb{P}}
\newcommand{\LL}{\mathcal{L}}
\newcommand{\EE}{\mathcal{E}}
\newcommand{\cA}{\mathcal{A}}
\newcommand{\cV}{\mathcal{V}}
\newcommand{\cD}{\mathcal{D}}
\newcommand{\bu}{\mathbf{u}}
\newcommand{\bv}{\mathbf{v}}
\newcommand{\bp}{\mathbf{p}}
\newcommand{\bq}{\mathbf{q}}
\newcommand{\id}{\mathrm{id}}
\newcommand{\Ass}{\mathrm{Ass}}
\DeclareMathOperator{\disc}{disc}
\DeclareMathOperator{\Sym}{Sym}
\DeclareMathOperator{\tr}{tr}

\begin{document}

\begin{frontmatter}

\title{A solution to the quantum Yang--Baxter equation associated with a universal two-valued algebraic group}

\author[smi,simc,yar]{Victor M. Buchstaber}
\ead{buchstab@mi-ras.ru}

\affiliation[smi]{organization={Steklov Mathematical Institute of the Russian Academy of Sciences},
            country={Russia}}
\affiliation[simc]{organization={Steklov International Mathematical Center},
            city={Moscow},
            country={Russia}}
\affiliation[yar]{organization={Centre of Integrable Systems, P.G.~Demidov Yaroslavl State University},
            city={Yaroslavl},
            country={Russia}}

\begin{abstract}
We construct spectral-parameter solutions of the quantum Yang--Baxter equation from the associativity condition of the universal symmetric $2$-algebraic two-valued group, whose multiplication law depends on four coefficients $k_2,k_4,k_6,k_8$ subject to the single relation $4k_8-k_4^2+k_2k_6=0$ -- which coincides with the three-dimensional WDVV equation for Dubrovin's normalised potential. Every three-dimensional Frobenius manifold with a cyclic vector field carries such a spectral two-valued group, whose finite branch points are the eigenvalues of the multiplication operator. For an arbitrary unital algebra we show that a two-parameter spectral $R$-operator built from multiplication and unit has Yang--Baxter defect proportional to an explicit tensor built from the associator; for tangent Frobenius algebras this yields a regular, unitary $9\times9$ $R$-matrix satisfying the quantum Yang--Baxter equation exactly at WDVV points, with spectral parameter the Abel coordinate of a bielliptic curve whose elliptic quotient is the spectral curve, together with an associated integrable chain. The $A_3$ Frobenius manifold is worked out explicitly.
\end{abstract}

\begin{keyword}
Two-valued groups \sep WDVV equations \sep Frobenius manifolds \sep Quantum Yang--Baxter equation \sep Elliptic curves \sep Hydrodynamic-type systems \sep Bielliptic curves
\MSC[2020] 16T25 \sep 53D45 \sep 20N20 \sep 14H52 \sep 37K10 \sep 35L65
\end{keyword}

\end{frontmatter}

\section{Introduction}\label{sec:intro}

The theory of multivalued groups originated in algebraic topology. Two-valued formal groups appeared in the work of Buchstaber and Novikov on characteristic classes and Adams operations in complex cobordism~\cite{BN1971}, and their algebraic theory was developed in~\cite{B1975}; see~\cite{B2006} for a survey of $n$-valued groups and their applications. A basic example is provided by an elliptic curve $E$: the addition law of $E$ induces on the quotient $E/\{p\sim-p\}\simeq\PP^1$ a two-valued multiplication, the product of the classes of $p$ and $q$ being the unordered pair of the classes of $p+q$ and $p-q$. In~\cite{B1990} this law was written as the zero set of an explicit symmetric polynomial in three variables. More recently, algebraic two-valued group structures on $\PP^1$ and on $\C$ were studied systematically~\cite{BGR2024,BK2026,BKR2026}. In particular, all symmetric $2$-algebraic two-valued groups on $\C$ are specialisations of a universal law given by a polynomial $B$ with four coefficients $k_2,k_4,k_6,k_8$, and associativity is equivalent to the single quadratic relation $4k_8-k_4^2+k_2k_6=0$~\cite{BK2026}.

The Witten--Dijkgraaf--Verlinde--Verlinde (WDVV) equations arose in two-dimensional topological field theory and topological gravity~\cite{Witten1990,DVV1991} and were given a geometric formulation by Dubrovin in the theory of Frobenius manifolds~\cite{Dubrovin1992,Dubrovin1996}. Algebraically they express the associativity of a family of commutative Frobenius algebras. In dimension three they reduce to a single third-order partial differential equation for a function $f(x,y)$, which can be written as an integrable Hamiltonian system of hydrodynamic type~\cite{FM1996}.

The Yang--Baxter equation arose from two closely related problems of mathematical physics. In Yang's work on the one-dimensional quantum many-body problem with $\delta$-function interaction~\cite{Yang1967}, it is the condition that the scattering of many particles factorises consistently into a sequence of two-particle scattering processes. Baxter~\cite{Baxter1972} used the same condition, in the form of commuting transfer matrices, to solve the eight-vertex model of statistical mechanics.

Although the WDVV equations and the quantum Yang--Baxter equation (QYBE) come from different areas, their physical origins reveal a common principle: \emph{a complicated process may be decomposed into elementary operations in several different ways, and the final result must be independent of the chosen decomposition.} For the WDVV equations the elementary operations are multiplications in a Frobenius algebra; for the QYBE they are two-particle scatterings. The aim of this paper is to show that, in dimension three, both equations are governed by one algebraic object: the associativity equation of the universal symmetric $2$-algebraic two-valued group. Along the way this equation acquires interpretations in terms of elliptic and bielliptic integrals, Frobenius algebras and manifolds, spectral curves and commuting operators, and integrable systems of hydrodynamic type (Fig.~\ref{fig:scheme}).

Our main results are as follows.
\begin{enumerate}
\item \emph{Elliptic realisation} (Section~\ref{sec:elliptic}). For a monic cubic $P(\lambda)=\lambda^3-s_1\lambda^2+s_2\lambda-s_3$, the two-valued addition law of the curve $w^2=P(\lambda)$ is given by the polynomial $(\sigma_2-s_2)^2-4(\sigma_1-s_1)(\sigma_3-s_3)$ in the elementary symmetric functions of the three arguments. After inversion of the coordinate it becomes the universal law with $k=(4s_1,-2s_2,4s_3,s_2^2-4s_1s_3)$, and every associative universal law arises in this way. The logarithm of the law is the square of an elliptic integral, and the exponential is expressed through the Weierstrass function $\wp$.
\item \emph{WDVV and associativity} (Section~\ref{sec:wdvv}). For Dubrovin's normalised three-dimensional potential, the identification $k=4(f_{xxx},f_{xxy},f_{xyy},f_{yyy})$ turns the WDVV equation into the associativity relation of the universal two-valued group, and into the commutativity of the operators of multiplication by $e_2$ and $e_3$ (Theorem~\ref{thm:main-correspondence}; see also~\cite{BKR2026}). A cyclic vector field on a three-dimensional Frobenius manifold defines a \emph{spectral structure}: a family of elliptic two-valued groups whose finite branch points are the eigenvalues of the multiplication operator.
\item \emph{Nonlinear dynamics of the coefficients} (Section~\ref{sec:hydro}). Along a solution of the WDVV equation the coefficients $k_j$ satisfy a system of conservation laws of hydrodynamic type; its characteristic velocities are the finite branch points of the spectral two-valued law. Conversely, a solution of this system determines the WDVV potential up to a quadratic polynomial, and the potential is recovered from the logarithms of the two-valued groups (a local inverse spectral theorem).
\item \emph{The quantum Yang--Baxter equation} (Section~\ref{sec:qybe}). For an arbitrary unital algebra $\cA$ we consider the two-parameter family of operators
$R(U,V)(\xi\otimes\eta)=\alpha(U-V)\,1\otimes\xi\eta+\beta(U-V)\,\xi\eta\otimes1-(\alpha U-\beta V)\,\eta\otimes\xi$
and prove that its Yang--Baxter defect is $(\alpha+\beta)(U-V)(U-W)(V-W)$ times an explicit tensor built from the associator of $\cA$ (Theorem~\ref{thm:defect}). For the three-dimensional tangent algebras of a Frobenius manifold this gives an explicit $9\times9$ matrix $R^{\mathrm{bi}}_{p;\alpha,\beta}(U,V)$ which is regular and unitary and which, for $\alpha+\beta\neq0$, satisfies the QYBE if and only if the WDVV equation holds at $p$ (Theorem~\ref{thm:main}). The spectral parameters are provided by the Abel coordinate $U=\sqrt{\LL_p(X)}$ of the two-valued group, a bielliptic integral on a genus-two curve whose elliptic quotient is the spectral curve. The regularity of the $R$-matrix yields an integrable nearest-neighbour chain in which particles hop and fuse according to the multiplication table of the Frobenius algebra (Proposition~\ref{prop:hamiltonian}).
\item \emph{Example} (Section~\ref{sec:A3}). For Dubrovin's $A_3$ potential the flat coordinates are, up to constant factors, the Weierstrass invariants of the spectral curve and the coupling constants of the integrable chain.
\end{enumerate}

Several approaches to spectral-parameter dependent solutions of the QYBE compose spectral parameters by means of one-valued formal group laws; see, for example, the lattice models of~\cite{BFHTW2023} and the references therein. One-valued formal groups have also been useful in other problems related to quantum groups and Yang--Baxter-type structures. However, the applicability of these approaches is limited when one works with algebraic constructions that employ a global rational affine spectral coordinate: the one-dimensional group laws that are rational in such a coordinate form only a two-parameter family, the formal group law of the two-parameter Todd genus~\cite{Hirzebruch1966}. (This is the classification of one-dimensional rational formal group laws over a field: the first proof is due to~\cite{Bismuth1976}, with a short proof for fields of characteristic zero in~\cite{ColemanMcGuinness1991}; the two-parameter genus itself is a homogeneous version of Hirzebruch's $\chi_y$-genus, a construction going back to~\cite{Krichever1974}.) The situation changes when one turns to two-valued groups. Our construction relies on the four-parameter universal algebraic two-valued group (with one relation) which, as we show, admits a natural realisation within the theory of WDVV equations.

The paper is organised as follows. Section~\ref{sec:twovalued} recalls two-valued groups and the universal symmetric law. Section~\ref{sec:elliptic} gives its elliptic realisation, logarithm and exponential. Section~\ref{sec:wdvv} treats Frobenius manifolds, the three-dimensional WDVV equation and the spectral structure. Section~\ref{sec:hydro} is devoted to the hydrodynamic form of the WDVV equation and to reconstruction of the potential. Section~\ref{sec:qybe} contains the construction of the $R$-matrix, the main theorem and the associated integrable chain, Section~\ref{sec:A3} treats the example of the $A_3$ Frobenius manifold, and Section~\ref{sec:discussion} contains concluding remarks. The expansion used in the proof of Theorem~\ref{thm:defect} is given in \ref{app:expansion}. All polynomial identities stated below have also been checked with the computer algebra system SymPy; the scripts are provided as supplementary material.

\begin{figure}[t]
\centering
\begin{tikzpicture}[
  box/.style={draw, rounded corners=2pt, align=center, inner sep=4pt, font=\small, text width=4.6cm},
  arr/.style={{Latex[length=2mm]}-{Latex[length=2mm]}, thick}]
\node[box] (assoc) {Associativity of the universal\\ two-valued group\\ $4k_8-k_4^2+k_2k_6=0$};
\node[box, right=1.6cm of assoc] (wdvv) {3D WDVV equation\\ $f_{xxy}^2=f_{xxx}f_{xyy}+f_{yyy}$};
\node[box, below=1.1cm of assoc] (ell) {Two-valued law of the\\ spectral curve $w^2=\det(\lambda-L_{e_2})$};
\node[box, below=1.1cm of wdvv] (comm) {Commuting multiplication\\ operators $[L_{e_2},L_{e_3}]=0$};
\node[box, below=1.1cm of ell] (hydro) {Conservation laws for $k_j$;\\ velocities $=$ branch points};
\node[box, below=1.1cm of comm] (qybe) {QYBE for $R^{\mathrm{bi}}_{p;\alpha,\beta}(U,V)$,\\ $\alpha+\beta\neq0$};
\draw[arr] (assoc) -- node[above,font=\footnotesize]{Thm~\ref{thm:main-correspondence}} (wdvv);
\draw[arr] (assoc) -- node[left,font=\footnotesize]{Prop.~\ref{prop:inversion}} (ell);
\draw[arr] (wdvv) -- node[right,font=\footnotesize]{Lemma~\ref{lem:MN}} (comm);
\draw[arr] (wdvv.south west) -- node[above,sloped,font=\footnotesize]{Prop.~\ref{prop:spectral}} (ell.north east);
\draw[arr] (ell) -- node[left,font=\footnotesize]{Sect.~\ref{sec:hydro}} (hydro);
\draw[arr] (comm) -- node[right,font=\footnotesize]{Thm~\ref{thm:main}} (qybe);
\end{tikzpicture}
\caption{The equivalent forms of the associativity condition discussed in this paper. All statements refer to a point $p$ of a three-dimensional Frobenius manifold with the normalised potential~\eqref{eq:potential}, and $(k_2,k_4,k_6,k_8)=4(f_{xxx},f_{xxy},f_{xyy},f_{yyy})$.}
\label{fig:scheme}
\end{figure}

\section{Two-valued algebraic groups}\label{sec:twovalued}

\subsection{Definitions}

For a set $\mathcal X$ we denote by $\Sym^n\mathcal X$ the set of unordered $n$-tuples (multisets) $[x_1,\dots,x_n]$ of elements of $\mathcal X$, with multiplicities retained.

\begin{definition}[\cite{B1975,B2006}]\label{def:twovalued}
A \emph{two-valued multiplication} on $\mathcal X$ is a map $\mathcal X\times\mathcal X\to\Sym^2\mathcal X$, $(x,y)\mapsto x*y=[z_1,z_2]$. It is extended to multisets by $[z_1,z_2]*y=[z_1*y,\,z_2*y]\in\Sym^4\mathcal X$ (union of multisets), and similarly on the other side. The multiplication is \emph{associative} if $(x*y)*z=x*(y*z)$ in $\Sym^4\mathcal X$ for all $x,y,z$. An element $e$ is a \emph{unit} if $e*x=x*e=[x,x]$, and a map $\mathrm{inv}\colon\mathcal X\to\mathcal X$ is an \emph{inverse} if $e\in\mathrm{inv}(x)*x$ and $e\in x*\mathrm{inv}(x)$ for all $x$. A \emph{two-valued group} is a set with an associative two-valued multiplication, a unit and an inverse; it is \emph{commutative} if $x*y=y*x$.
\end{definition}

A two-valued multiplication on $\C$ is \emph{algebraic} if it is described by a polynomial
\begin{equation}\label{eq:P2}
P_2(Z;X,Y)=F_0(X,Y)Z^2+F_1(X,Y)Z+F_2(X,Y)
\end{equation}
whose roots in $Z$ are the two values of $X*Y$.

\begin{definition}\label{def:symmetric}
A \emph{symmetric $2$-algebraic two-valued group} on $\C$ is an algebraic two-valued group with unit $0$ and inverse $\mathrm{inv}(X)=X$, whose multiplication is given by a polynomial~\eqref{eq:P2} that is symmetric in the three variables $X,Y,Z$ and has degree at most $2$ in each of them.
\end{definition}

In terms of $P_2$ the axioms read as follows. The unit axiom $X*0=[X,X]$ is equivalent to $P_2(Z;X,0)=F_0(X,0)(Z-X)^2$ with $F_0(X,0)\not\equiv0$. The inverse axiom $0\in X*X$ is equivalent to $F_2(X,X)=0$. Commutativity $X*Y=Y*X$ is equivalent to $P_2(Z;X,Y)=P_2(Z;Y,X)$ and is automatic for symmetric $P_2$. If $Z_1,Z_2$ are the roots of $P_2(Z;X,Y)$, associativity is the equality of the four-valued multisets $(X*Y)*Z$ and $X*(Y*Z)$; equivalently, the two monic quartic polynomials in an auxiliary variable $T$ whose roots are these four values coincide identically in $X,Y,Z$. This is the polynomial form of the associativity axiom. The quartics can be computed as resultants: up to factors independent of $T$ they are $\operatorname{Res}_W\bigl(P_2(W;X,Y),P_2(T;W,Z)\bigr)$ and $\operatorname{Res}_W\bigl(P_2(W;Y,Z),P_2(T;X,W)\bigr)$.

\subsection{The elliptic two-valued group}

Let $E$ be a smooth projective curve of genus one with a chosen point $O$; this choice gives an addition law on $E$ with neutral element $O$. Let
\[
\pi\colon E\to E/\{p\sim-p\}\simeq\PP^1
\]
be the quotient map. For $p,q\in E$ the addition induces the two-valued product
\begin{equation}\label{eq:elliptic-law}
\pi(p)*\pi(q)=[\pi(p+q),\pi(p-q)],
\end{equation}
which is well defined because the right-hand side does not change when $p$ or $q$ is replaced by its negative. Choose an affine coordinate on $\PP^1$ in which $\pi(O)=0$. In~\cite{B1990} the law~\eqref{eq:elliptic-law} is represented, for a vector of parameters $\mathbf a=(a_1,a_2,a_3)\in\C^3$, by the polynomial
\begin{multline}\label{eq:Ba}
B_{\mathbf a}(X,Y,Z)=(X+Y+Z-a_2XYZ)^2\\
-4(1+a_3XYZ)(XY+XZ+YZ+a_1XYZ).
\end{multline}
With $\sigma_1=X+Y+Z$, $\sigma_2=XY+XZ+YZ$, $\sigma_3=XYZ$ this reads
\begin{equation}\label{eq:Ba-sigma}
B_{\mathbf a}=\sigma_1^2-4\sigma_2-4a_1\sigma_3-2a_2\sigma_1\sigma_3-4a_3\sigma_2\sigma_3+(a_2^2-4a_1a_3)\sigma_3^2.
\end{equation}

\subsection{The universal symmetric \texorpdfstring{$2$}{2}-algebraic two-valued group}

For $k=(k_2,k_4,k_6,k_8)\in\C^4$ put
\begin{equation}\label{eq:B}
B_k(X,Y,Z)=\sigma_1^2-4\sigma_2+k_2\sigma_3+k_4\sigma_1\sigma_3+k_6\sigma_2\sigma_3+k_8\sigma_3^2 .
\end{equation}
The indices indicate the natural grading: if $X,Y,Z$ have weight $-2$, then $k_{2j}$ has weight $2j$ and $B_k$ is homogeneous.

\begin{proposition}\label{prop:normal-form}
Let $P_2(Z;X,Y)$ be symmetric in $X,Y,Z$, of degree at most $2$ in each variable, and satisfy the unit axiom. Then $P_2=c\,B_k$ for some $c\in\C^\times$ and $k\in\C^4$. Conversely, for every $k\in\C^4$ the polynomial $B_k$ satisfies the unit, inverse and commutativity axioms.
\end{proposition}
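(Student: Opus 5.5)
The plan is to express $P_2$ in the elementary symmetric functions and let the unit axiom fix the coefficients of every monomial not involving $\sigma_3$. A symmetric polynomial of degree at most $2$ in each of $X,Y,Z$ is a polynomial in $\sigma_1,\sigma_2,\sigma_3$. The monomial $\sigma_1^a\sigma_2^b\sigma_3^c$ has degree $a+b+c$ in $X$ and leading $X$-coefficient $(Y+Z)^b(YZ)^c\neq0$, and different triples give linearly independent polynomials. Collecting all monomials of top $X$-degree therefore shows that only triples with $a+b+c\le2$ can occur. So $P_2$ is a linear combination of the ten monomials
\[
1,\ \sigma_1,\ \sigma_2,\ \sigma_3,\ \sigma_1^2,\ \sigma_1\sigma_2,\ \sigma_2^2,\ \sigma_1\sigma_3,\ \sigma_2\sigma_3,\ \sigma_3^2 .
\]

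Next we impose the unit axiom $P_2(Z;X,0)=F_0(X,0)(Z-X)^2$ with $F_0(X,0)\not\equiv0$. At $Y=0$ we have $\sigma_1=X+Z$, $\sigma_2=XZ$ and $\sigma_3=0$, so every monomial containing $\sigma_3$ vanishes. Write $s=X+Z$ and $t=XZ$. The remaining part is a polynomial $g(s,t)$ built from $1,s,t,s^2,st,t^2$, and it must equal $F_0(X,0)(s^2-4t)$.

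Since $g$ is symmetric in $X,Z$ and $s^2-4t=(X-Z)^2$, the factor $F_0(X,0)$ must itself be symmetric in $X$ and $Z$. It has degree at most $2$ in $X$, so it lies in the span of $1,s,t$. Comparing the bidegree in $(X,Z)$ of $g$, which is at most $(2,2)$, with $(s^2-4t)$ times such a factor, only the constant factor survives. Checking directly: $s\,(s^2-4t)$ contains $s^3$, and $t\,(s^2-4t)$ contains $s^2t$, neither of which lies in the span of $1,s,t,s^2,st,t^2$. Hence $g=c(s^2-4t)$ with $c\neq0$. Consequently the coefficients of $1$, $\sigma_1$, $\sigma_1\sigma_2$ and $\sigma_2^2$ vanish, and those of $\sigma_1^2$ and $\sigma_2$ equal $c$ and $-4c$.

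The four $\sigma_3$-monomials $\sigma_3,\sigma_1\sigma_3,\sigma_2\sigma_3,\sigma_3^2$ remain with free coefficients. Dividing by $c$ gives exactly $B_k$ as in \eqref{eq:B}, which proves the first part.

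For the converse, symmetry of $B_k$ gives commutativity at once. Setting $Y=0$ in $B_k$ gives $(X+Z)^2-4XZ=(Z-X)^2$, and the $Z^2$-coefficient of $B_k$ at $Y=0$ is $1$, which verifies the unit axiom. For the inverse axiom, $F_2(X,X)$ is $B_k$ evaluated at $Z=0$, $Y=X$. This gives $\sigma_1=2X$, $\sigma_2=X^2$, $\sigma_3=0$, so $B_k=4X^2-4X^2=0$.

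The only step needing care is the bidegree argument excluding non-constant $F_0(X,0)$. It is routine once the reduction to the span of the ten monomials is done.
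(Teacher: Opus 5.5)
Your proof is correct, and its skeleton is the paper's. You expand $P_2$ in a basis of symmetric polynomials and set $Y=0$ to kill everything carrying a factor $\sigma_3$ (in the paper, every $m_\mu$ with three parts). The unit axiom then fixes the six surviving coefficients, and the four free ones are read off as $k$.

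The difference is the choice of basis. The paper uses the monomial symmetric functions $m_\mu$, $\mu\subseteq(2,2,2)$, for which the constraint ``degree at most $2$ in each variable'' is built in. It converts to $\sigma_1^2-4\sigma_2,\sigma_3,\sigma_1\sigma_3,\sigma_2\sigma_3,\sigma_3^2$ only at the end. You work with $\sigma$-monomials throughout, so the final identification with $B_k$ is immediate. The price is the leading-$X$-coefficient argument excluding $a+b+c\ge3$. That argument is sound: for fixed $a+b+c$ the leading coefficients $(Y+Z)^b(YZ)^c$ are distinct monomials in the algebraically independent $Y+Z$, $YZ$.

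For the unit axiom the paper is more economical. It only uses that $P_2(Z;X,0)$ vanishes at $Z=X$. In your notation this is $g(2X,X^2)\equiv0$, which at once kills the coefficients of $1,s,st,t^2$ and forces the coefficient of $t$ to be $-4$ times that of $s^2$. The double root then comes for free.

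One step of yours does not follow as written. The claim ``it has degree at most $2$ in $X$, so it lies in the span of $1,s,t$'' is false for symmetric polynomials in general: $s^2$, $st$, $t^2$ all have degree $2$ in $X$. What you actually have is stronger. $F_0(X,0)$ involves only $X$, so once you know it is invariant under $X\leftrightarrow Z$ it is a constant, and the subsequent bidegree check is unnecessary. With that one-line repair the argument is complete.
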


\begin{proof}
A symmetric polynomial of degree at most $2$ in each variable is a linear combination of the ten monomial symmetric functions $m_\mu$, $\mu\subseteq(2,2,2)$. Setting $Y=0$ kills the $m_\mu$ with three parts and leaves
\[
\begin{aligned}
P_2(Z;X,0)={}&c_0+c_1(X+Z)+c_{11}XZ+c_{2}(X^2+Z^2)\\
&+c_{21}(X^2Z+XZ^2)+c_{22}X^2Z^2 .
\end{aligned}
\]
The unit axiom requires this to vanish at $Z=X$ identically in $X$, which gives $c_0=c_1=c_{21}=c_{22}=0$ and $c_{11}=-2c_2$; then $P_2(Z;X,0)=c_2(Z-X)^2$ and $c_2\ne0$. Hence
\[
P_2=c_2(m_2-2m_{11})+c_{111}m_{111}+c_{211}m_{211}+c_{221}m_{221}+c_{222}m_{222},
\]
and it remains to note that $m_2-2m_{11}=\sigma_1^2-4\sigma_2$, $m_{111}=\sigma_3$, $m_{211}=\sigma_1\sigma_3$, $m_{221}=\sigma_2\sigma_3$, $m_{222}=\sigma_3^2$. Conversely, $B_k(X,0,Z)=(Z-X)^2$ and $B_k(X,X,0)=(2X)^2-4X^2=0$.
\end{proof}

\begin{theorem}[\cite{BK2026}]\label{thm:universal}
The two-valued multiplication defined by $B_k$ is associative if and only if
\begin{equation}\label{eq:assoc}
4k_8-k_4^2+k_2k_6=0 .
\end{equation}
Hence the coefficient ring of the universal symmetric $2$-algebraic two-valued group is $\Z[k_2,k_4,k_6,k_8]/(4k_8-k_4^2+k_2k_6)$.
\end{theorem}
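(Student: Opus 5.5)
We prove the two directions separately: sufficiency through the elliptic law $B_{\mathbf a}$, and necessity through a power-series computation of the resultant form of associativity.

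\emph{Sufficiency.} Compare~\eqref{eq:Ba-sigma} with~\eqref{eq:B}. The elliptic law $B_{\mathbf a}$ is $B_k$ with
\[
k=(-4a_1,\,-2a_2,\,-4a_3,\,a_2^2-4a_1a_3).
\]
For these values, $4k_8-k_4^2+k_2k_6=4a_2^2-16a_1a_3-4a_2^2+16a_1a_3=0$. So every elliptic law satisfies~\eqref{eq:assoc}.

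Conversely, the map $\mathbf a\mapsto k$ is a bijection of $\C^3$ onto the quadric $\{4k_8=k_4^2-k_2k_6\}$. Indeed, $a_1,a_2,a_3$ are read off from $k_2,k_4,k_6$, and then $k_8$ is forced. Hence every $k$ satisfying~\eqref{eq:assoc} is of the form $B_{\mathbf a}$.

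For $\mathbf a$ such that $B_{\mathbf a}$ comes from a smooth genus-one curve as in~\eqref{eq:elliptic-law}, associativity is inherited from associativity of addition on $E$. Both $(\pi p*\pi q)*\pi r$ and $\pi p*(\pi q*\pi r)$ equal the multiset $[\pi(p\pm q\pm r)]$. This is the content of~\cite{B1990}, which we take as the input identifying $B_{\mathbf a}$ with~\eqref{eq:elliptic-law}.

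Associativity is the identical vanishing in $X,Y,Z,T$ of the difference of two normalised resultant quartics, whose coefficients are polynomial in $k$. It is therefore a Zariski-closed condition on $k$. Smooth parameters are Zariski dense in the irreducible quadric, so associativity holds on the whole quadric, degenerate points included.

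\emph{Necessity.} Here we compute directly. Associativity forces equality of the two monic quartics in $T$. It is enough to expand both near the unit, in low order in $X,Y,Z$, and find one coefficient whose difference is a nonzero multiple of $4k_8-k_4^2+k_2k_6$.

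To do this, solve $B_k(X,Y,W)=0$ for $W$ as a power series, using the branches through $(X\pm\sqrt{\cdot}\,)^2$. Equivalently, use the symmetric functions of the two roots $W_1,W_2$:
\[
W_1+W_2=-F_1/F_0,\qquad W_1W_2=F_2/F_0 .
\]
Then form $\prod_i P_2(T;W_i,Z)/F_0(W_i,Z)$, which is symmetric in $W_1,W_2$, and hence a rational function of $X,Y,Z,T$. Do the same for the other bracketing, and subtract.

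By the weight grading ($X,Y,Z,T$ of weight $-2$, $k_{2j}$ of weight $2j$), the difference is homogeneous. Terms involving only $k_2,k_4,k_6$ linearly cancel because of the symmetry of $B_k$. The first obstruction appears at weight $8$, where the only available invariants are $k_8$, $k_4^2$ and $k_2k_6$. So the obstruction is $\lambda k_8+\mu k_4^2+\nu k_2k_6$.

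Sufficiency already shows that this combination vanishes on the quadric. It must therefore be a scalar multiple of $4k_8-k_4^2+k_2k_6$, and it only remains to show the scalar is nonzero. The cleanest check is the specialisation $k_2=k_4=k_6=0$: show that $k_8\sigma_3^2$ alone breaks associativity, by evaluating one coefficient at a convenient point such as $Z=Y$.

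\emph{Main obstacle.} The hardest step is this necessity computation: isolating the correct weight-$8$ coefficient and verifying it is nonzero. I would do it with the resultant formulation, checked by computer algebra as the paper does.

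For the final claim about the coefficient ring, note that the conditions are defined over $\Z$. Since $k_8=(k_4^2-k_2k_6)/4$, the universal law is defined over $\Z[k_2,k_4,k_6,k_8]/(4k_8-k_4^2+k_2k_6)$, as stated.
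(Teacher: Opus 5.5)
Your proposal follows the paper's route: sufficiency comes from the elliptic realisation of the laws on the quadric together with Zariski density (the paper proves this realisation itself, via $D_s$ and Proposition~\ref{prop:inversion}, rather than importing it from the 1990 formula for $B_{\mathbf a}$), and necessity comes from a coefficient of the resultant defect that equals a nonzero constant times $4k_8-k_4^2+k_2k_6$, which is exactly the paper's statement that $G_k$ has nonzero numerical coefficients. Your list of weight-$8$ monomials omits $k_2^4$ and $k_2^2k_4$; this is harmless, because any weight-$8$ polynomial vanishing on the quadric is a constant multiple of $4k_8-k_4^2+k_2k_6$, and, as in the paper, the nonvanishing of that constant is left to explicit computation (it can also be seen by hand: for $k=(0,0,0,\epsilon)$ and $u=\sqrt X$, $v=\sqrt Y$ the branches are $u\pm v\mp\frac{\epsilon}{8}\bigl(uv(u\pm v)\bigr)^3+O(\epsilon^2)$, and $\bigl(uv(u+v)\bigr)^3$ is not an additive $2$-cocycle).
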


A conceptual proof of sufficiency is given in Corollary~\ref{cor:sufficiency} below. Necessity can be checked directly: if $Q_1,Q_2$ are the two resultants above (with $P_2=B_k$) and $l_1,l_2$ their leading coefficients in $T$, then
\[
l_2Q_1-l_1Q_2=(4k_8-k_4^2+k_2k_6)\,XYZ(X-Z)\,G_k(X,Y,Z,T),
\]
where $G_k$ is a polynomial in $X,Y,Z,T$ some of whose coefficients are nonzero numerical constants, independent of $k$. Therefore, if~\eqref{eq:assoc} fails, the two quartics are different.

\begin{corollary}\label{cor:elliptic-special}
The elliptic law~\eqref{eq:Ba} is the specialisation of the universal law given by
\[
k_2=-4a_1,\qquad k_4=-2a_2,\qquad k_6=-4a_3,\qquad k_8=a_2^2-4a_1a_3,
\]
and these values satisfy~\eqref{eq:assoc}.
\end{corollary}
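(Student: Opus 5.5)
The plan is to compare coefficients of the two polynomials written in the basis $\sigma_1^2,\sigma_2,\sigma_3,\sigma_1\sigma_3,\sigma_2\sigma_3,\sigma_3^2$. The expanded form \eqref{eq:Ba-sigma} of $B_{\mathbf a}$ is already in this shape, and so is the definition \eqref{eq:B} of $B_k$. Both begin with $\sigma_1^2-4\sigma_2$. The remaining coefficients can then be read off term by term: $\sigma_3$ gives $k_2=-4a_1$; $\sigma_1\sigma_3$ gives $k_4=-2a_2$; $\sigma_2\sigma_3$ gives $k_6=-4a_3$; and $\sigma_3^2$ gives $k_8=a_2^2-4a_1a_3$.

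Before relying on \eqref{eq:Ba-sigma}, I would confirm it quickly from \eqref{eq:Ba}. Expanding $(\sigma_1-a_2\sigma_3)^2$ gives $\sigma_1^2-2a_2\sigma_1\sigma_3+a_2^2\sigma_3^2$. Expanding $-4(1+a_3\sigma_3)(\sigma_2+a_1\sigma_3)$ gives $-4\sigma_2-4a_1\sigma_3-4a_3\sigma_2\sigma_3-4a_1a_3\sigma_3^2$. Adding the two reproduces \eqref{eq:Ba-sigma}.

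It remains to verify \eqref{eq:assoc}. Substituting the values above:
\[
4k_8-k_4^2+k_2k_6=4(a_2^2-4a_1a_3)-4a_2^2+16a_1a_3=0 .
\]
This is consistent with the elliptic law being associative, since it comes from a genuine group structure on $E$. There is no real obstacle here: the only care needed is the sign bookkeeping in the expansion, and this is routine.
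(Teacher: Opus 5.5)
Your proposal is correct and is essentially the argument the paper intends: the paper gives no separate proof, and the corollary follows by matching the expanded form \eqref{eq:Ba-sigma} against the definition \eqref{eq:B} of $B_k$ term by term, then checking $4(a_2^2-4a_1a_3)-4a_2^2+16a_1a_3=0$. Your re-derivation of \eqref{eq:Ba-sigma} from \eqref{eq:Ba} is also correct and is a reasonable sanity check.
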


\section{Elliptic realisation, logarithm and exponential}\label{sec:elliptic}

\subsection{The two-valued law of a cubic}

For $s=(s_1,s_2,s_3)\in\C^3$ let
\begin{equation}\label{eq:Ps}
P_s(\lambda)=\lambda^3-s_1\lambda^2+s_2\lambda-s_3 ,
\end{equation}
and for $\lambda_1,\lambda_2,\lambda_3\in\C$ put
\begin{equation}\label{eq:Ds-def}
D_s(\lambda_1,\lambda_2,\lambda_3)=\disc_\lambda\bigl[P_s(\lambda)-(\lambda-\lambda_1)(\lambda-\lambda_2)(\lambda-\lambda_3)\bigr],
\end{equation}
where the discriminant of $a\lambda^2+b\lambda+c$ is $b^2-4ac$ (also when $a=0$). Let $\sigma_i(\lambda)$ denote the elementary symmetric functions of $\lambda_1,\lambda_2,\lambda_3$.

\begin{proposition}\label{prop:Ds}
\begin{enumerate}
\item[\textup{(i)}] $D_s=(\sigma_2(\lambda)-s_2)^2-4(\sigma_1(\lambda)-s_1)(\sigma_3(\lambda)-s_3)$; it is symmetric and of degree at most $2$ in each $\lambda_i$.
\item[\textup{(ii)}] As a polynomial in $\lambda_3$, $D_s$ has leading coefficient $(\lambda_1-\lambda_2)^2$ and discriminant
\[
\disc_{\lambda_3}D_s(\lambda_1,\lambda_2,\lambda_3)=2^4P_s(\lambda_1)P_s(\lambda_2).
\]
\item[\textup{(iii)}] Suppose $P_s$ has three distinct roots, and let $E_s\colon w^2=P_s(\lambda)$ be the corresponding elliptic curve with neutral element $O$ at infinity. If $\bp,\bq\in E_s$ and $\lambda(\bp)\ne\lambda(\bq)$, then the two roots of $D_s(\lambda(\bp),\lambda(\bq),\,\cdot\,)$ are $\lambda(\bp+\bq)$ and $\lambda(\bp-\bq)$.
\end{enumerate}
\end{proposition}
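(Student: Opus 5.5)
For (i) I will compute directly. Write $\sigma_i=\sigma_i(\lambda)$. Then
\[
P_s(\lambda)-(\lambda-\lambda_1)(\lambda-\lambda_2)(\lambda-\lambda_3)=(\sigma_1-s_1)\lambda^2+(s_2-\sigma_2)\lambda+(\sigma_3-s_3),
\]
because the cubic terms cancel. The discriminant $b^2-4ac$ of this quadratic is $(\sigma_2-s_2)^2-4(\sigma_1-s_1)(\sigma_3-s_3)$. Symmetry is then evident. For the degree bound, each $\sigma_i$ is linear in each $\lambda_j$, so every product of two of them has degree at most $2$ in $\lambda_j$.

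For (ii) I will expand $D_s$ in $\lambda_3$. Put $e_1=\lambda_1+\lambda_2$ and $e_2=\lambda_1\lambda_2$, so that
\[
\sigma_1=e_1+\lambda_3,\qquad \sigma_2=e_2+e_1\lambda_3,\qquad \sigma_3=e_2\lambda_3.
\]
The $\lambda_3^2$ coefficient is $e_1^2-4e_2=(\lambda_1-\lambda_2)^2$. To get the discriminant, I will write $D_s=A\lambda_3^2+B\lambda_3+C$ with
\[
A=e_1^2-4e_2,\quad B=2e_1(e_2-s_2)-4e_2(e_1-s_1)+4s_3,\quad C=(e_2-s_2)^2+4s_3(e_1-s_1).
\]
I will then verify $B^2-4AC=16P_s(\lambda_1)P_s(\lambda_2)$ as a polynomial identity in $e_1,e_2,s$, where $P_s(\lambda_1)P_s(\lambda_2)$ is expressed in $e_1,e_2$ as a symmetric function. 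This is a routine but somewhat messy expansion. A cleaner check is to note that both sides are symmetric in $\lambda_1,\lambda_2$ and of degree at most $6$ total, and to compare them on enough specialisations, e.g.\ $\lambda_2=0$, together with weight homogeneity. This is the step where errors are most likely, and I expect to fall back on direct expansion (or SymPy, as the paper does).

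For (iii) I will use the classical chord construction. Let $\bp=(\lambda_1,w_1)$ and $\bq=(\lambda_2,w_2)$ with $\lambda_1\ne\lambda_2$. The line $w=m\lambda+n$ through $\bp$ and $\bq$ meets $E_s$ in a third point $-(\bp+\bq)$, whose $\lambda$-coordinate $\lambda_3$ satisfies
\[
P_s(\lambda)-(m\lambda+n)^2=(\lambda-\lambda_1)(\lambda-\lambda_2)(\lambda-\lambda_3).
\]
The quadratic $P_s(\lambda)-\prod(\lambda-\lambda_i)$ therefore equals the perfect square $(m\lambda+n)^2$, so its discriminant vanishes and $D_s(\lambda_1,\lambda_2,\lambda(\bp+\bq))=0$. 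Replacing $\bq$ by $-\bq$, which has the same $\lambda$-coordinate, gives $D_s(\lambda_1,\lambda_2,\lambda(\bp-\bq))=0$.

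It remains to show that these exhaust the roots with multiplicity. By (ii), $D_s(\lambda_1,\lambda_2,\cdot)$ is a genuine quadratic, since its leading coefficient $(\lambda_1-\lambda_2)^2$ is nonzero. If $\lambda(\bp+\bq)\ne\lambda(\bp-\bq)$, we have two distinct roots and we are done. Equality occurs exactly when $\bp+\bq=\pm(\bp-\bq)$, i.e.\ $2\bq=O$ or $2\bp=O$. That happens precisely when $w_2=0$ or $w_1=0$, i.e.\ $P_s(\lambda_1)P_s(\lambda_2)=0$. In that case the discriminant in (ii) vanishes, so the quadratic has a double root, which must be the common value. The same reasoning handles the degenerate case where the line is vertical or tangent: it is excluded by $\lambda_1\ne\lambda_2$ and covered by continuity. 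This completes the identification of the root multiset.
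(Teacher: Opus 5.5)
Your proposal is correct and follows the paper's proof: direct expansion for (i)--(ii) (your $A,B,C$ are right, and $B^2-4AC=16P_s(\lambda_1)P_s(\lambda_2)$ does check out by expansion, though the specialisation shortcut alone would not prove it), and the chord construction for (iii). The one difference is that you handle the coincident case $\lambda(\bp+\bq)=\lambda(\bp-\bq)$ directly: it forces $2\bp=O$ or $2\bq=O$, so the discriminant from (ii) vanishes and the root is double. The paper instead appeals to continuity, so your treatment of the multiplicities is slightly more explicit.
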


\begin{proof}
(i) Since $P_s(\lambda)-\prod_i(\lambda-\lambda_i)=(\sigma_1(\lambda)-s_1)\lambda^2+(s_2-\sigma_2(\lambda))\lambda+(\sigma_3(\lambda)-s_3)$, the formula follows. (ii) is a direct computation.
(iii) Assume first that $\bp,\bq,\bp\pm\bq$ are distinct from $O$ and that $\lambda(\bp+\bq)\neq\lambda(\bp-\bq)$. The line $w=m\lambda+n$ through $\bp$ and $\bq$ meets $E_s$ in the third point $-(\bp+\bq)$, whose $\lambda$-coordinate is $\lambda(\bp+\bq)$. Hence $P_s(\lambda)-(m\lambda+n)^2=(\lambda-\lambda(\bp))(\lambda-\lambda(\bq))(\lambda-\lambda(\bp+\bq))$, so that $P_s(\lambda)-(\lambda-\lambda(\bp))(\lambda-\lambda(\bq))(\lambda-\lambda(\bp+\bq))$ is a square and its discriminant vanishes. Replacing $\bq$ by $-\bq$ gives the root $\lambda(\bp-\bq)$. Since the leading coefficient $(\lambda(\bp)-\lambda(\bq))^2$ is nonzero, these are all the roots. The remaining cases follow by continuity.
\end{proof}

By (ii), for fixed $\lambda_1$ with $P_s(\lambda_1)\ne0$, the two-valued function $\lambda_2\mapsto\lambda_1*\lambda_2$ is branched exactly at the roots of $P_s$, that is, at the $\lambda$-coordinates of the nonzero points of order two; when $\lambda_2=\lambda_1$ one of its values is the neutral element $\lambda=\infty$.

\begin{proposition}\label{prop:inversion}
For every $s\in\C^3$,
\begin{equation}\label{eq:inversion}
\begin{gathered}
(XYZ)^2\,D_s\bigl(X^{-1},Y^{-1},Z^{-1}\bigr)=B_{k(s)}(X,Y,Z),\\
k(s)=\bigl(4s_1,\,-2s_2,\,4s_3,\,s_2^2-4s_1s_3\bigr).
\end{gathered}
\end{equation}
The map $s\mapsto k(s)$ is a bijection of $\C^3$ onto the quadric $\mathcal Q=\{4k_8-k_4^2+k_2k_6=0\}$, with inverse $s=(k_2/4,-k_4/2,k_6/4)$.
\end{proposition}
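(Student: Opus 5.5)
The plan is to substitute $\lambda_i=X_i^{-1}$ into the explicit formula of Proposition~\ref{prop:Ds}(i) and then compare coefficients with \eqref{eq:B}. Write $\sigma_j=\sigma_j(X,Y,Z)$. Then
\[
\sigma_1(\lambda)=\sigma_2/\sigma_3,\qquad \sigma_2(\lambda)=\sigma_1/\sigma_3,\qquad \sigma_3(\lambda)=1/\sigma_3 .
\]
Multiplying by $\sigma_3^2=(XYZ)^2$ gives
\[
\sigma_3^2D_s=(\sigma_1-s_2\sigma_3)^2-4(\sigma_2-s_1\sigma_3)(1-s_3\sigma_3).
\]
Expanding the right-hand side:
\[
\sigma_1^2-4\sigma_2+4s_1\sigma_3-2s_2\sigma_1\sigma_3+4s_3\sigma_2\sigma_3+(s_2^2-4s_1s_3)\sigma_3^2 .
\]
Matching this term by term with \eqref{eq:B} gives $k_2=4s_1$, $k_4=-2s_2$, $k_6=4s_3$ and $k_8=s_2^2-4s_1s_3$. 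This is exactly $k(s)$, which proves \eqref{eq:inversion}.

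For the bijection, I first check that the image lies in $\mathcal Q$:
\[
4k_8-k_4^2+k_2k_6=4s_2^2-16s_1s_3-4s_2^2+16s_1s_3=0 .
\]
The map is injective, since $s$ is recovered linearly from the first three components by $s=(k_2/4,-k_4/2,k_6/4)$. It is also surjective onto $\mathcal Q$. Given $k\in\mathcal Q$, set $s$ by this formula. The first three components of $k(s)$ then agree with those of $k$ automatically. The fourth is $s_2^2-4s_1s_3=(k_4^2-k_2k_6)/4$, and this equals $k_8$ precisely because of the relation \eqref{eq:assoc}. So $\mathcal Q$ is the graph of $k_8=(k_4^2-k_2k_6)/4$ over the $(k_2,k_4,k_6)$-space, and $s\mapsto k(s)$ is a linear isomorphism onto that space.

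No step is a real obstacle. The only point needing care is the inversion of the elementary symmetric functions, including the clearing of denominators by $\sigma_3^2$, which is a polynomial identity valid for all $X,Y,Z$ after multiplying out.
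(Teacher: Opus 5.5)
Your proposal is correct and follows essentially the same route as the paper: substitute $\lambda_i=X_i^{-1}$ into Proposition~\ref{prop:Ds}(i), clear denominators by $\sigma_3^2$, expand, and read off $k(s)$, then check $4k_8-k_4^2+k_2k_6=0$. Your explicit injectivity and surjectivity argument, with $\mathcal Q$ viewed as the graph of $k_8=(k_4^2-k_2k_6)/4$, spells out the bijection claim that the paper leaves to the stated inverse formula.
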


\begin{proof}
Writing $\tau_i$ for the elementary symmetric functions of $X,Y,Z$, we have $\sigma_1(\lambda)=\tau_2/\tau_3$, $\sigma_2(\lambda)=\tau_1/\tau_3$, $\sigma_3(\lambda)=1/\tau_3$ for $\lambda_1=X^{-1}$, etc. Multiplying Proposition~\ref{prop:Ds}(i) by $\tau_3^2$ gives
\[
\begin{aligned}
&(\tau_1-s_2\tau_3)^2-4(\tau_2-s_1\tau_3)(1-s_3\tau_3)\\
&\qquad=\tau_1^2-4\tau_2+4s_1\tau_3-2s_2\tau_1\tau_3+4s_3\tau_2\tau_3+(s_2^2-4s_1s_3)\tau_3^2 ,
\end{aligned}
\]
which is~\eqref{eq:inversion}. Finally, $4(s_2^2-4s_1s_3)-4s_2^2+16s_1s_3=0$.
\end{proof}

\begin{corollary}\label{cor:sufficiency}
If $k\in\mathcal Q$, the law $B_k$ is associative. Thus the associative universal laws are exactly the two-valued addition laws of the curves $w^2=P_s(\lambda)$, written in the coordinate $X=1/\lambda$.
\end{corollary}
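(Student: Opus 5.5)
Given $k\in\mathcal Q$, I will use Proposition~\ref{prop:inversion} to write $k=k(s)$ with $s=(k_2/4,-k_4/2,k_6/4)$. The identity \eqref{eq:inversion} then shows that, after the coordinate change $X=1/\lambda$, the zero set of $B_k(X,Y,Z)$ coincides with the zero set of $D_s(\lambda_1,\lambda_2,\lambda_3)$ on the locus where all coordinates are nonzero. It therefore suffices to prove that the two-valued law defined by $D_s$ is associative. Associativity of a polynomial law is the identity of two quartics in $T$ whose coefficients are polynomial in $(X,Y,Z)$ and in $s$. I will prove the identity on a Zariski-dense set of parameters and extend it by continuity.

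First take $s$ with $P_s$ having three distinct roots, so that $E_s$ is an elliptic curve. By Proposition~\ref{prop:Ds}(iii), for $\lambda(\bp)\ne\lambda(\bq)$ the two values of $\lambda(\bp)*\lambda(\bq)$ are $\lambda(\bp\pm\bq)$; in other words, the law is the elliptic law \eqref{eq:elliptic-law}. Taking $\lambda_i=\lambda(\bp_i)$ for $i=1,2,3$, both $(\lambda_1*\lambda_2)*\lambda_3$ and $\lambda_1*(\lambda_2*\lambda_3)$ equal the multiset $[\lambda(\bp_1\pm\bp_2\pm\bp_3)]$ over all sign choices. This holds by associativity and commutativity of addition on $E_s$, and because $\pi$ identifies $\bp$ with $-\bp$. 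Hence the two quartics agree for generic $(\lambda_1,\lambda_2,\lambda_3)$, that is, on a dense open set, and therefore identically. In the $X$ coordinate the point $O$ corresponds to $X=0$, the unit of $B_k$, and the leading-coefficient degeneracies are confined to a proper closed subset.

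Next I extend to all $s$. The nondegenerate $s$ form the complement of the discriminant hypersurface in $\C^3$. The coefficients of the difference of the two normalised quartics, for instance $l_2Q_1-l_1Q_2$ built from $B_{k(s)}$, are polynomials in $s$ that vanish on this dense set, so they vanish identically. This gives associativity for every $k\in\mathcal Q$. The second sentence of the statement then follows from Proposition~\ref{prop:inversion} together with Theorem~\ref{thm:universal}, whose necessity direction is already established.

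The main obstacle is making the density argument rigorous. The quartics are defined through resultants, and these carry leading factors $l_1,l_2$ that may vanish. I would therefore compare the cross-multiplied polynomials $l_2Q_1-l_1Q_2$, which are polynomial identities. I also need to check that the passage through $\lambda=1/X$ does not change these identities except by nonvanishing monomial factors. I would handle this by clearing denominators using \eqref{eq:inversion} for each factor $P_2$ entering the resultants.
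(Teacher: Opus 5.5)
Your proposal is correct and follows the paper's proof essentially step for step. You write $k=k(s)$, identify $B_k$ for $\disc P_s\neq0$ with the elliptic law of $E_s$ in the coordinate $X=1/\lambda$ so that both triple products equal the multiset of the four values at $\bp_1\pm\bp_2\pm\bp_3$, and pass to the discriminant locus by density in $s$. Your cross-multiplied comparison $l_2Q_1-l_1Q_2$ and your appeal to the necessity half of Theorem~\ref{thm:universal} for the ``exactly'' clause are only slightly more explicit renderings of what the paper's argument implicitly uses.
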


\begin{proof}
Let $k=k(s)$. If $P_s$ has distinct roots, Propositions~\ref{prop:Ds}(iii) and~\ref{prop:inversion} show that $B_k$ describes the law~\eqref{eq:elliptic-law} of $E_s$ in the coordinate $X=1/\lambda(\bp)$, with unit $X=0$ corresponding to $O$. For $X_i=X(\bp_i)$ both $(X_1*X_2)*X_3$ and $X_1*(X_2*X_3)$ are equal to the multiset of the four values $X(\bp_1+\epsilon_2\bp_2+\epsilon_3\bp_3)$, $\epsilon_2,\epsilon_3\in\{\pm1\}$, so the law is associative. The coefficients of the two monic quartics of Section~\ref{sec:twovalued} are rational functions of $(X_1,X_2,X_3)$ and polynomial in $s$; their equality on the dense set of $s$ with $\disc P_s\ne0$ implies it for all $s$.
\end{proof}

In particular, the polynomial two-valued law extends algebraically across the discriminant locus $\disc P_s=0$, although the elliptic curve itself degenerates there.

\subsection{Logarithm and exponential}

Let $k\in\mathcal Q$ and put
\begin{equation}\label{eq:g2g3}
g_2=2k_4+\frac{k_2^2}{12},\qquad g_3=k_6+\frac{k_2k_4}{6}+\frac{k_2^3}{216}.
\end{equation}
Let $\wp(u)=\wp(u;g_2,g_3)$ be the Weierstrass function; when $g_2^3=27g_3^2$ we understand $\wp$ as its Laurent series $u^{-2}+\frac{g_2}{20}u^2+\frac{g_3}{28}u^4+\dots$ at $u=0$, whose coefficients are polynomials in $g_2,g_3$.

\begin{proposition}\label{prop:log}
Let $k\in\mathcal Q$ and $s=(k_2/4,-k_4/2,k_6/4)$.
\begin{enumerate}
\item[\textup{(i)}] The substitution $\lambda=\wp(u)+k_2/12$, $w=\wp'(u)/2$ transforms $\wp'^2=4\wp^3-g_2\wp-g_3$ into $w^2=P_s(\lambda)$.
\item[\textup{(ii)}] The \emph{logarithm}
\begin{equation}\label{eq:log}
\LL(X)=\left(\int_0^{\sqrt X}\frac{dt}{\sqrt{1-\frac{k_2}{4}t^2-\frac{k_4}{2}t^4-\frac{k_6}{4}t^6}}\right)^2=X+O(X^2)
\end{equation}
and the \emph{exponential}
\begin{equation}\label{eq:exp}
\EE(S)=\frac{1}{\wp(\sqrt S)+\frac{k_2}{12}}=S+O(S^2)
\end{equation}
are power series with coefficients in $\mathbb Q[k_2,k_4,k_6]$, and they are mutually inverse: $\LL(\EE(S))=S$.
\item[\textup{(iii)}] Put $U(X)=\sqrt{\LL(X)}=\sqrt X\,(1+O(X))$. Then
\begin{equation}\label{eq:law-log}
X*Y=\Bigl[\EE\bigl((U(X)+U(Y))^2\bigr),\ \EE\bigl((U(X)-U(Y))^2\bigr)\Bigr],
\end{equation}
that is, $U(Z_\pm)=U(X)\pm U(Y)$ for the two values $Z_\pm$ of $X*Y$. Equivalently, $B_k\bigl(\EE(u^2),\EE(v^2),\EE((u\pm v)^2)\bigr)=0$ identically in $u,v$.
\end{enumerate}
\end{proposition}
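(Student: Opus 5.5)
For (i) I will substitute directly. Put $\lambda=\wp+k_2/12$ and $w=\wp'/2$. Then $4w^2=\wp'^2$, and I expand $4P_s(\wp+k_2/12)$ with $s=(k_2/4,-k_4/2,k_6/4)$. The shift $k_2/12=s_1/3$ is exactly the one that kills the quadratic term. The remaining coefficients then give $-g_2=4(s_2-s_1^2/3)$ and $-g_3=4(-s_3+s_1s_2/3-2s_1^3/27)$. Inserting the values of $s$ reproduces \eqref{eq:g2g3}. This is a routine check.

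For (ii) I will change variables in the integral. Put $t=1/\sqrt{\lambda}$, so $t^2=1/\lambda$. Then
\[
1-\tfrac{k_2}{4}t^2-\tfrac{k_4}{2}t^4-\tfrac{k_6}{4}t^6=t^6\bigl(\lambda^3-s_1\lambda^2+s_2\lambda-s_3\bigr)=t^6P_s(\lambda),
\]
and $dt=-\tfrac12\lambda^{-3/2}d\lambda$. The integrand therefore becomes $-\tfrac12\,d\lambda/\sqrt{P_s(\lambda)}=-d\lambda/(2w)$, and the integral runs from $\lambda=\infty$ to $\lambda=1/X$. By (i), $d\lambda/(2w)=\wp'\,du/\wp'=du$, and $\lambda=\infty$ corresponds to $u=0$. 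Hence $\sqrt{\LL(X)}=u$ with $\wp(u)+k_2/12=1/X$, that is, $X=\EE(u^2)$.

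For the power-series claims I note the following. The integrand is an even series in $t$ with coefficients in $\mathbb Q[k_2,k_4,k_6]$, so the integral is $\sqrt X$ times a series in $X$, and its square is a series in $X$. Likewise $\wp(u)$ is $u^{-2}$ plus an even series in $u$ with coefficients polynomial in $g_2,g_3$, hence in the $k$'s. So $1/(\wp+k_2/12)=u^2(1+O(u^2))$ is a series in $S=u^2$. The identity $\LL(\EE(S))=S$ is then the statement just proved. It is first established for small $u$ and generic $k$, and then holds formally because both sides are polynomial in the coefficients. The degenerate case $g_2^3=27g_3^2$ is covered by the same polynomiality argument, since $\wp$ is interpreted there via its Laurent series.

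For (iii) I will use Corollary~\ref{cor:sufficiency} together with Propositions~\ref{prop:Ds}(iii) and~\ref{prop:inversion}. For generic $k\in\mathcal Q$, $P_s$ has distinct roots. The point $\bp(u)=(\wp(u)+k_2/12,\wp'(u)/2)$ is the image of $u$ under the uniformisation, and addition on $E_s$ corresponds to addition of $u$. For $X=\EE(u^2)=1/\lambda(\bp(u))$ and $Y=\EE(v^2)$, Proposition~\ref{prop:Ds}(iii) says the roots of $D_s(\lambda(\bp),\lambda(\bq),\cdot)$ are $\lambda(\bp(u\pm v))$. By \eqref{eq:inversion}, the roots of $B_k(X,Y,\cdot)$ are then $\EE((u\pm v)^2)$.

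This gives $B_k(\EE(u^2),\EE(v^2),\EE((u\pm v)^2))=0$ for generic $k$ and small $u,v$ with $u\neq\pm v$ modulo the lattice. The left side is a formal power series in $u,v$ whose coefficients are polynomials in $k$, so the identity extends to all of $\mathcal Q$. The main subtlety is this final step: confirming that the composite expression $B_k(\EE(u^2),\EE(v^2),\EE((u\pm v)^2))$, which a priori is only a series in $u,v$, really has polynomial coefficients in $k$. That holds because $\EE(S)$ has coefficients in $\mathbb Q[k_2,k_4,k_6]$ and $B_k$ is polynomial in $k$. Once that is checked, the generic-to-all extension goes through, and the relation $U(Z_\pm)=U(X)\pm U(Y)$ follows by applying $\LL$ and taking square roots.
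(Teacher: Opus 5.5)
Your proposal is correct and follows essentially the same route as the paper: direct substitution for (i), the change of variables $\lambda=t^{-2}$ combined with $du=d\lambda/(2w)$ and a choice of sign for $u$ for (ii), and for (iii) the uniformisation homomorphism together with Propositions~\ref{prop:Ds}(iii) and~\ref{prop:inversion}, extended from the dense set $\disc P_s\neq0$ because the coefficients are polynomial in $k_2,k_4,k_6$. The only cosmetic difference is that you cite Corollary~\ref{cor:sufficiency}, which this argument does not need.
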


\begin{proof}
(i) Substituting $\lambda=\wp+k_2/12$ into $P_s(\lambda)=\lambda^3-\frac{k_2}{4}\lambda^2-\frac{k_4}{2}\lambda-\frac{k_6}{4}$ removes the quadratic term and gives $\wp^3-\frac{g_2}{4}\wp-\frac{g_3}{4}=\frac14\wp'^2=w^2$ with $g_2,g_3$ as in~\eqref{eq:g2g3}.
(ii) Since $\wp$ is even, $\EE$ is a series in $S=u^2$. By (i), $du=d\lambda/(2w)$. For $\lambda=1/X$ and $X=t^2$ we get $d\lambda/(2w)=\mp dt/\sqrt{1-s_1t^2+s_2t^4-s_3t^6}$, and $1-s_1t^2+s_2t^4-s_3t^6=1-\frac{k_2}{4}t^2-\frac{k_4}{2}t^4-\frac{k_6}{4}t^6$. Since $u=0$ corresponds to $X=0$ and only $u^2$ enters $\EE$, we may fix the sign of $u$ so that $u=U(\EE(u^2))$, i.e.\ $\LL(\EE(S))=S$.
(iii) If $\disc P_s\ne0$, the map $u\mapsto(\lambda,w)$ is the uniformisation $\C\to E_s$, a group homomorphism. Hence $1/\EE((u\pm v)^2)$ are the $\lambda$-coordinates of the sum and difference of the points with parameters $u$ and $v$, and the claim follows from Propositions~\ref{prop:Ds}(iii) and~\ref{prop:inversion}. The coefficients of the series $B_k(\EE(u^2),\EE(v^2),\EE((u\pm v)^2))$ are polynomials in $k_2,k_4,k_6$ that vanish on a dense set, hence identically.
\end{proof}

Thus the logarithm and the exponential determine the two-valued law completely. The coefficient $k_8$ does not enter $\LL$; it is recovered from the associativity relation~\eqref{eq:assoc}.

\section{Frobenius manifolds, the WDVV equation and the spectral structure}\label{sec:wdvv}

\subsection{Frobenius algebras and Frobenius manifolds}

\begin{definition}
A \emph{Frobenius algebra} over $\C$ is a finite-dimensional unital associative algebra $A$ together with a linear functional $\theta\colon A\to\C$ such that the bilinear form $\eta_A(a,b)=\theta(ab)$ is nondegenerate, i.e.\ $\eta_A(a,b)=0$ for all $b\in A$ implies $a=0$. The form is then invariant: $\eta_A(ab,c)=\eta_A(a,bc)$. The Frobenius algebra is \emph{symmetric} if $\eta_A(a,b)=\eta_A(b,a)$ for all $a,b$; every commutative Frobenius algebra is symmetric.
\end{definition}

We use the ungraded definition of a Frobenius manifold, without the Euler vector field axiom~\cite{Dubrovin1996}.

\begin{definition}
A \emph{Frobenius manifold} is a complex manifold $\mathcal M$ equipped with
\begin{itemize}
\item a holomorphic, commutative and associative product $u\circ v$ of vector fields with a unit vector field $e$;
\item a flat nondegenerate holomorphic metric $\eta$ invariant under the product, $\eta(u\circ v,w)=\eta(u,v\circ w)$;
\end{itemize}
such that, for the Levi-Civita connection $\nabla$ of $\eta$, $\nabla e=0$ and the tensor $\nabla c$ is totally symmetric, where $c(u,v,w)=\eta(u\circ v,w)$. Total symmetry means that $(\nabla_zc)(u,v,w)$ is invariant under all permutations of the vector fields $z,u,v,w$.
\end{definition}

Every tangent space $T_p\mathcal M$ is a commutative Frobenius algebra with unit $e_p$, form $\eta_p$ and functional $\theta_p(a)=\eta_p(a,e_p)$.

\subsection{The WDVV equations}

In flat coordinates $t^1,\dots,t^n$ with $e=\partial/\partial t^1$ there is locally a function $F(t)$, the potential, such that $c(\partial_i,\partial_j,\partial_k)=F_{ijk}:=\partial^3F/\partial t^i\partial t^j\partial t^k$ and $\eta_{ij}=F_{1ij}$ is a constant nondegenerate symmetric matrix. With $(\eta^{ij})=(\eta_{ij})^{-1}$ the multiplication of the basis vectors $e_i=\partial/\partial t^i$ is
\[
e_i\circ e_j=c_{ij}^{\,k}e_k,\qquad c_{ij}^{\,k}=\eta^{kl}F_{ijl},
\]
with summation over repeated indices. Then $e_1$ is the unit and the multiplication is commutative; its associativity $(e_i\circ e_j)\circ e_k=e_i\circ(e_j\circ e_k)$ is equivalent to the WDVV equations~\cite{Witten1990,DVV1991,Dubrovin1992}
\begin{equation}\label{eq:WDVV-general}
F_{ijm}\,\eta^{mn}\,F_{nkl}=F_{ikm}\,\eta^{mn}\,F_{njl}.
\end{equation}

\subsection{The three-dimensional Dubrovin algebra}

Consider Dubrovin's normalised three-dimensional potential~\cite{Dubrovin1996}
\begin{equation}\label{eq:potential}
F(t^1,t^2,t^3)=\tfrac12(t^1)^2t^3+\tfrac12t^1(t^2)^2+f(x,y),\qquad x=t^2,\ y=t^3,
\end{equation}
so that $\eta_{13}=\eta_{22}=1$ and all other $\eta_{ij}=0$, and put
\begin{equation}\label{eq:abcd}
b=f_{xxx},\qquad a=f_{xxy},\qquad c=f_{xyy},\qquad d=f_{yyy}.
\end{equation}
The tangent multiplication is
\begin{equation}\label{eq:table}
e_1=1,\qquad e_2^2=ae_1+be_2+e_3,\qquad e_2e_3=ce_1+ae_2,\qquad e_3^2=de_1+ce_2 .
\end{equation}
For arbitrary $(a,b,c,d)\in\C^4$ the table~\eqref{eq:table} defines a commutative unital three-dimensional algebra $A_{(a,b,c,d)}$, and the form $\eta$ is invariant for all values of the parameters. We write $(\xi_1,\xi_2,\xi_3)=(\xi_1\xi_2)\xi_3-\xi_1(\xi_2\xi_3)$ for the associator and
\begin{equation}\label{eq:Phi}
\Phi=a^2-bc-d .
\end{equation}

\begin{lemma}\label{lem:assoc}
In $A_{(a,b,c,d)}$ the nonzero associators of basis vectors are
\[
(e_2,e_2,e_3)=-(e_3,e_2,e_2)=-\Phi e_1,\qquad (e_2,e_3,e_3)=-(e_3,e_3,e_2)=\Phi e_2 .
\]
Consequently $A_{(a,b,c,d)}$ is associative if and only if $\Phi=0$, which is the three-dimensional WDVV equation
\begin{equation}\label{eq:WDVV3}
f_{xxy}^2=f_{xxx}f_{xyy}+f_{yyy}.
\end{equation}
\end{lemma}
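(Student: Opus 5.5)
The plan is a direct computation from the multiplication table~\eqref{eq:table}, after two reductions that cut down the number of associators to check.

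\emph{Step 1 (reductions).} The associator is trilinear, so it suffices to compute it on basis vectors. It vanishes whenever one argument is the unit $e_1$, so only triples from $\{e_2,e_3\}$ remain. Commutativity gives the identity
\[
(\xi_1,\xi_2,\xi_3)=(\xi_1\xi_2)\xi_3-\xi_1(\xi_2\xi_3)=\xi_3(\xi_2\xi_1)-(\xi_3\xi_2)\xi_1=-(\xi_3,\xi_2,\xi_1).
\]
In particular $(\xi,\eta,\xi)=0$. This kills $(e_2,e_2,e_2)$, $(e_3,e_3,e_3)$, $(e_2,e_3,e_2)$ and $(e_3,e_2,e_3)$. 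It also reduces the remaining four triples to two: $(e_2,e_2,e_3)$ and $(e_2,e_3,e_3)$, with $(e_3,e_2,e_2)$ and $(e_3,e_3,e_2)$ their negatives.

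\emph{Step 2 (the two computations).} Expanding with~\eqref{eq:table}:
\[
(e_2e_2)e_3=(ae_1+be_2+e_3)e_3=(bc+d)e_1+(ab+c)e_2+ae_3,
\]
\[
e_2(e_2e_3)=e_2(ce_1+ae_2)=a^2e_1+(ab+c)e_2+ae_3.
\]
Subtracting gives $(e_2,e_2,e_3)=(bc+d-a^2)e_1=-\Phi e_1$.

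Similarly,
\[
(e_2e_3)e_3=(ce_1+ae_2)e_3=ac\,e_1+a^2e_2+ce_3,
\]
\[
e_2(e_3e_3)=e_2(de_1+ce_2)=ac\,e_1+(bc+d)e_2+ce_3.
\]
Subtracting gives $(e_2,e_3,e_3)=\Phi e_2$.

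\emph{Step 3 (conclusion).} By trilinearity the algebra is associative if and only if all basis associators vanish. By Steps~1 and~2 this happens if and only if $\Phi=0$. Substituting~\eqref{eq:abcd} into $\Phi=a^2-bc-d=0$ gives exactly~\eqref{eq:WDVV3}.

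There is no genuine obstacle here. The only care needed is the bookkeeping in Step~1, namely that commutativity yields the antisymmetry $(\xi_1,\xi_2,\xi_3)=-(\xi_3,\xi_2,\xi_1)$, together with routine sign tracking in Step~2.
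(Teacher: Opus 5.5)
Your proposal is correct and follows the paper's proof: a direct expansion from the multiplication table, discarding associators that involve $e_1$ by unitality and using commutativity to dispose of the rest. Your identity $(\xi_1,\xi_2,\xi_3)=-(\xi_3,\xi_2,\xi_1)$ spells out the reduction that the paper leaves implicit in ``computed in the same way'' and ``by commutativity'', and your two expansions, with results $-\Phi e_1$ and $\Phi e_2$, are correct.
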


\begin{proof}
For instance, $(e_2e_2)e_3=a e_3+b(ce_1+ae_2)+(de_1+ce_2)$ and $e_2(e_2e_3)=c e_2+a(ae_1+be_2+e_3)$, whose difference is $(bc+d-a^2)e_1$. The other associators are computed in the same way; those involving $e_1$ vanish, and $(e_2,e_3,e_2)=(e_3,e_2,e_3)=0$ by commutativity.
\end{proof}

Thus the commutative four-parameter algebra becomes a three-dimensional associative Frobenius algebra $A_\cD$ precisely on the quadric
\[
\cD=\bigl\{(a,b,c,d)\in\C^4 \bigm| a^2-bc-d=0\bigr\}.
\]

\begin{lemma}\label{lem:MN}
Let $M=L_{e_2}$ and $N=L_{e_3}$ be the operators of multiplication by $e_2$ and $e_3$. In the basis $(e_1,e_2,e_3)$ (the $j$-th column contains the image of $e_j$)
\[
M=\begin{pmatrix}0&a&c\\ 1&b&a\\ 0&1&0\end{pmatrix},\qquad
N=\begin{pmatrix}0&c&d\\ 0&a&c\\ 1&0&0\end{pmatrix},\qquad
[M,N]=\Phi\begin{pmatrix}0&1&0\\ 0&0&-1\\ 0&0&0\end{pmatrix}.
\]
Hence the WDVV equation holds if and only if $[M,N]=0$. For all $(a,b,c,d)$ the vector $e_1$ is cyclic for $M$: the vectors $e_1$, $Me_1=e_2$, $M^2e_1=ae_1+be_2+e_3$ form a basis. The characteristic polynomial of $M$ is
\begin{equation}\label{eq:chiM}
\chi_M(t)=t^3-bt^2-2at-c ,
\end{equation}
and if $\Phi=0$ then $A_\cD\simeq\C[t]/(\chi_M(t))$ with $e_2\mapsto t$, and $N=M^2-bM-a$.
\end{lemma}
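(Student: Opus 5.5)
The proof is a direct computation from the multiplication table~\eqref{eq:table}, with each claim read off in turn.

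First, the matrices. The $j$-th column of $M$ is $e_2e_j$: $e_2e_1=e_2$, $e_2^2=ae_1+be_2+e_3$, $e_2e_3=ce_1+ae_2$. This gives the stated $M$. Likewise the columns of $N$ are $e_3e_1=e_3$, $e_3e_2=ce_1+ae_2$, $e_3^2=de_1+ce_2$, which gives $N$.

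Next, the commutator. I would compute $MN$ and $NM$ entrywise. Rather than grind, note that $[M,N]e_j=e_2(e_3e_j)-e_3(e_2e_j)$, and by commutativity this equals the associator $(e_3,e_j,e_2)-(e_2,e_j,e_3)$ up to sign conventions. Lemma~\ref{lem:assoc} then gives:
\begin{itemize}
\item $[M,N]e_1=0$;
\item $[M,N]e_2=\pm\Phi e_1$;
\item $[M,N]e_3=\mp\Phi e_2$.
\end{itemize}
This is the stated matrix. I would fix the signs by the explicit check $(MN-NM)e_2=M(ce_1+ae_2)-N(ae_1+be_2+e_3)$. Expanding, this is $ce_2+a(ae_1+be_2+e_3)-ae_3-b(ce_1+ae_2)-(de_1+ce_2)=(a^2-bc-d)e_1=\Phi e_1$. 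The equivalence of WDVV with $[M,N]=0$ is then immediate.

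Then cyclicity and the characteristic polynomial. The vectors $e_1,e_2,ae_1+be_2+e_3$ form a unitriangular change of basis, so they are a basis for all $(a,b,c,d)$. For $\chi_M$, I would expand $\det(t-M)$ along the first column. This gives
\[
t\bigl(t(t-b)-a\bigr)+1\cdot\bigl(-a t-c\bigr)=t^3-bt^2-2at-c.
\]
The cofactor in the second term should be double-checked for signs.

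Finally, the case $\Phi=0$. Then $A_\cD$ is commutative, associative and unital. The map $\C[t]\to A_\cD$, $t\mapsto e_2$, is an algebra homomorphism with $p(t)\mapsto p(M)e_1$, and it is surjective by cyclicity. By Cayley--Hamilton, $\chi_M(e_2)=\chi_M(M)e_1=0$. Hence the map factors through $\C[t]/(\chi_M)$, and since both sides are three-dimensional it is an isomorphism. For $N=M^2-bM-a$: in the associative case $N=L_{e_3}$ and $e_3=e_2^2-be_2-ae_1$, so $L_{e_3}=L_{e_2}^2-bL_{e_2}-a$. Here $L_{xy}=L_xL_y$ uses associativity. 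The only real care needed is sign bookkeeping in the commutator and determinant.
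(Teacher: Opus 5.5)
Your proof is correct and follows the paper's route: you read the matrices and the commutator directly off the multiplication table, and for $\Phi=0$ the surjection $\C[t]\to A_\cD$, $t\mapsto e_2$, factors through $\C[t]/(\chi_M)$ by Cayley--Hamilton and is an isomorphism by a dimension count; you also spell out the cyclicity, the determinant expansion and $L_{e_3}=L_{e_2}^2-bL_{e_2}-a$, which the paper leaves implicit. One small slip: by commutativity $[M,N]e_j=(e_3,e_j,e_2)=-(e_2,e_j,e_3)$, not the difference of these two associators (which is twice it), but your explicit check of the $e_2$ column, together with this identity and Lemma~\ref{lem:assoc} for the $e_3$ column, gives the stated matrix, so nothing is affected.
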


\begin{proof}
The matrices and the commutator are obtained directly from~\eqref{eq:table}. If $\Phi=0$, the homomorphism $\C[t]\to A_\cD$, $t\mapsto e_2$, is surjective because $e_3=e_2^2-be_2-ae_1$, and $\chi_M(e_2)=\chi_M(M)e_1=0$ by the Cayley--Hamilton theorem; comparing dimensions gives the isomorphism.
\end{proof}

\subsection{Physical origin: primary chiral algebras of two-dimensional topological field theory}\label{subsec:physical}

The algebra $A_{(a,b,c,d)}$ is not an ad hoc algebraic construction: in Dubrovin's formulation it is the tangent Frobenius algebra of a moduli space of two-dimensional topological field theories (TFT), and we recall this physical background, referring throughout to~\cite{Dubrovin1996} (see already~\cite{Dubrovin1992}).

In a 2D TFT the matter sector is a finite-dimensional space $A$ of primary observables, and the TFT axioms endow $A$ with a commutative Frobenius algebra structure -- the \emph{primary chiral algebra} of the theory -- whose invariant metric and multiplication tensor are genus-zero correlators
\[
\eta_{\alpha\beta}=\langle\phi_\alpha\phi_\beta\rangle_0,\qquad
c_{\alpha\beta\gamma}=\langle\phi_\alpha\phi_\beta\phi_\gamma\rangle_0
\]
in a basis $\phi_1=1,\phi_2,\dots,\phi_n$ (\cite[Thm.~2.1, Rem.~2.1]{Dubrovin1996}). Perturbing the theory by the integrated descendants of $\phi_2,\dots,\phi_n$ with coupling constants $t=(t^1,\dots,t^n)$ produces a family of primary chiral algebras $A_t$ satisfying the WDVV equations, i.e.\ the moduli space of couplings carries a Frobenius-manifold structure (\cite[Thm.~2.2]{Dubrovin1996}). Thus the tangent algebra $T_p\mathcal M$ at a point $p$ of a Frobenius manifold is, physically, the algebra of primary observables of the corresponding perturbed TFT.

For $n=3$ this is precisely~\eqref{eq:table}, and the four structure constants are the independent genus-zero three-point functions of the two nontrivial primaries,
\[
b=\langle\phi_2\phi_2\phi_2\rangle_{0,t},\quad
a=\langle\phi_2\phi_2\phi_3\rangle_{0,t},\quad
c=\langle\phi_2\phi_3\phi_3\rangle_{0,t},\quad
d=\langle\phi_3\phi_3\phi_3\rangle_{0,t},
\]
the remaining three-point functions being fixed by the normalisation of the unit and the metric. In this language Lemma~\ref{lem:assoc} says that the WDVV equation $\Phi=0$ is exactly associativity of the operator product of the perturbed primary observables.

\begin{remark}[Two physical realisations]\label{rmk:physical-realisations}
Two standard models realise~\eqref{eq:table} with $\Phi=0$ identically. \emph{(i)} The $A_3$ topological Landau--Ginzburg model, whose free energy is Dubrovin's potential~\eqref{eq:A3} below (\cite[App.~A, eq.~(A.7), Lecture~2]{Dubrovin1996}); the algebra is then the Jacobian ring $\C[z]/(W'(z))$ of the superpotential $W(z)=\tfrac14z^4-yz^2-xz$, and by~\eqref{eq:chiM} the characteristic polynomial of $L_{e_2}$ equals $W'(z)$, so the eigenvalues of $L_{e_2}$ -- the finite branch points of the spectral two-valued law, Proposition~\ref{prop:spectral}(iii) -- are exactly the critical points of the Landau--Ginzburg superpotential; when these are nondegenerate and pairwise distinct, Dubrovin's massive theory identifies differences of the corresponding critical values with soliton masses (\cite[Example~2.4]{Dubrovin1996}). \emph{(ii)} The $A$-model topological sigma-model with target $\C P^2$, on the small quantum cohomology locus, with $e_2^2=e_3$, $e_2e_3=e^{t^2}e_1$ (\cite[Example~2.2]{Dubrovin1996}), i.e.\ $a=b=d=0$, $c=e^{t^2}$ in the notation of~\eqref{eq:table}. Thus the three-dimensional Dubrovin algebra is not tied to a single microscopic realisation.
\end{remark}

We stress a delimitation that we maintain throughout the paper. What is established in~\cite{Dubrovin1996,Dubrovin1992} is the physical meaning of the algebra~\eqref{eq:table} itself, of its structure constants as genus-zero three-point functions, and of the WDVV equation as the associativity condition for the perturbed primary chiral algebra. The elliptic curve $w^2=P_D(\lambda)$ and the two-valued group of Sections~\ref{sec:twovalued}--\ref{sec:elliptic}, the spectral structure below, and the $R$-matrix~\eqref{eq:Rmatrix} of Section~\ref{sec:qybe} are additional structures introduced in the present work and in~\cite{BKR2026}; they should not be attributed to the standard physical interpretation of the Dubrovin algebra recalled above. In particular, an $R$-matrix constructed from~\eqref{eq:table} should not automatically be identified with a physical scattering matrix of the $A_3$ Landau--Ginzburg theory -- such an identification would require additional dynamical input beyond the WDVV/associativity structure used here. The safe physical statement is: the three-dimensional algebra entering our construction is the algebra of primary observables of a two-dimensional TFT, its coefficients are genus-zero three-point functions, and its associativity is the WDVV equation; this provides a direct and well-established physical origin for the algebraic input of the Yang--Baxter construction below, while keeping separate the new spectral geometry and the question of a scattering or vertex-model interpretation of the resulting $R$-matrix.

\subsection{Spectral structure}

Let $(\mathcal M,\circ,\eta,e)$ be a three-dimensional Frobenius manifold. A vector field $\bv$ is \emph{cyclic} on an open set $\mathcal M_0\subset\mathcal M$ if $e,\bv,\bv\circ\bv$ form a local frame there. Put
\[
P_\bv(\lambda;p)=\det\bigl(\lambda-L_\bv(p)\bigr)=\lambda^3-s_1(p)\lambda^2+s_2(p)\lambda-s_3(p),\qquad p\in\mathcal M_0,
\]
where $L_\bv(p)$ is the operator of multiplication by $\bv$ in $T_p\mathcal M$.

\begin{proposition}\label{prop:spectral}
Let $\bv$ be cyclic on $\mathcal M_0$ and $p\in\mathcal M_0$.
\begin{enumerate}
\item[\textup{(i)}] $T_p\mathcal M\simeq\C[t]/(P_\bv(t;p))$ as algebras.
\item[\textup{(ii)}] If $P_\bv(\,\cdot\,;p)$ has three distinct roots, the curve $E_p\colon w^2=P_\bv(\lambda;p)$ is an elliptic curve with neutral point at infinity, and its two-valued law $\lambda_1*_p\lambda_2=[\lambda(\bp+\bq),\lambda(\bp-\bq)]$ has the graph $D_\bv(\lambda_1,\lambda_2,\lambda_3)=0$, where
\[
\begin{aligned}
D_\bv&=\disc_\lambda\bigl[P_\bv(\lambda)-(\lambda-\lambda_1)(\lambda-\lambda_2)(\lambda-\lambda_3)\bigr]\\
&=(\sigma_2(\lambda)-s_2)^2-4(\sigma_1(\lambda)-s_1)(\sigma_3(\lambda)-s_3).
\end{aligned}
\]
\item[\textup{(iii)}] $\disc_{\lambda_3}D_\bv=2^4P_\bv(\lambda_1)P_\bv(\lambda_2)$; thus the finite branch points of the two-valued multiplication are precisely the eigenvalues of $L_\bv(p)$.
\item[\textup{(iv)}] In the coordinate $X=1/\lambda$ the law is the universal law $B_{k(p)}$ with
\[
k_2=4s_1,\qquad k_4=-2s_2,\qquad k_6=4s_3,\qquad k_8=s_2^2-4s_1s_3 ,
\]
which satisfies~\eqref{eq:assoc} identically and is defined for all $p\in\mathcal M_0$, including the discriminant locus.
\end{enumerate}
\end{proposition}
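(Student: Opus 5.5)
The plan is to reduce each part to results already proved for a single cubic, using only the cyclicity of $\bv$ and the Cayley--Hamilton theorem, as in Lemma~\ref{lem:MN}.

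For (i), fix $p\in\mathcal M_0$ and consider the algebra homomorphism $\C[t]\to T_p\mathcal M$ sending $t\mapsto\bv(p)$. Cyclicity means that $e,\bv,\bv\circ\bv$ form a basis of $T_p\mathcal M$, so the map is surjective. By Cayley--Hamilton, $P_\bv(L_\bv)=0$, hence $P_\bv(\bv)=P_\bv(L_\bv)e=0$ and the map factors through $\C[t]/(P_\bv(t;p))$. Both sides have dimension $3$, so the induced map is an isomorphism. In passing one notes that $P_\bv$ is also the minimal polynomial of $L_\bv$, since $e$ is a cyclic vector.

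For (ii) and (iii), the coefficients $s_1,s_2,s_3$ of $P_\bv(\lambda;p)$ are just a point $s\in\C^3$, so the claims are Proposition~\ref{prop:Ds} applied to $P_s$ with $s=s(p)$.
\begin{itemize}
\item The formula for $D_\bv$ is Proposition~\ref{prop:Ds}(i).
\item When the roots are distinct, the fact that the graph of the two-valued law of $E_p$ is $D_\bv=0$ is Proposition~\ref{prop:Ds}(iii), extended to the case $\lambda(\bp)=\lambda(\bq)$ by continuity or by the leading-coefficient remark.
\item The discriminant identity is Proposition~\ref{prop:Ds}(ii).
\end{itemize}
For fixed $\lambda_1$ with $P_\bv(\lambda_1)\neq0$, the two roots in $\lambda_3$ coincide exactly when $P_\bv(\lambda_2)=0$. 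These values of $\lambda_2$ are the roots of the characteristic polynomial, that is, the eigenvalues of $L_\bv(p)$. One should also handle the degenerate case $\lambda_1=\lambda_2$, where the leading coefficient $(\lambda_1-\lambda_2)^2$ vanishes and one value escapes to $\infty$. This is the neutral element, so it is not a finite branch point.

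For (iv), apply Proposition~\ref{prop:inversion} with $s=s(p)$. It gives $(XYZ)^2D_\bv(X^{-1},Y^{-1},Z^{-1})=B_{k(s)}$ with the stated $k$, and $k(s)$ lies on $\mathcal Q$, which can be checked directly: $4(s_2^2-4s_1s_3)-4s_2^2+16s_1s_3=0$. Since $s(p)$ depends holomorphically on $p$ and the formula is polynomial in $s$, $B_{k(p)}$ is defined on all of $\mathcal M_0$. By Corollary~\ref{cor:sufficiency} it is associative there, including on the locus $\disc P_\bv=0$.

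No step is a genuine obstacle, since everything reduces to earlier propositions. The only point that needs care is the precise meaning of "finite branch points" in (iii). One must argue that, away from the excluded case $P_\bv(\lambda_1)=0$, branching of $\lambda_2\mapsto\lambda_1*\lambda_2$ occurs exactly at zeros of the discriminant in $\lambda_3$, which are the eigenvalues. This follows once the leading coefficient is nonzero; it is nonzero for $\lambda_2\neq\lambda_1$, and the point $\lambda_2=\lambda_1$ is handled as above.
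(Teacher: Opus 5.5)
Your proposal is correct and follows the paper's own proof. Part (i) is the cyclic-vector and Cayley--Hamilton argument of Lemma~\ref{lem:MN}, and parts (ii)--(iv) are Propositions~\ref{prop:Ds} and~\ref{prop:inversion}, together with Corollary~\ref{cor:sufficiency} for associativity, applied at $s=s(p)$. Your treatment of the degenerate case $\lambda_1=\lambda_2$, where one value is the neutral element at $\infty$, is a correct elaboration that matches the paper's comment following Proposition~\ref{prop:Ds}.
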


\begin{proof}
(i) is proved as in Lemma~\ref{lem:MN}, and (ii)--(iv) follow from Propositions~\ref{prop:Ds} and~\ref{prop:inversion} with $s=s(p)$.
\end{proof}

We call the family of two-valued groups $\{B_{k(p)}\}_{p\in\mathcal M_0}$ the \emph{spectral structure} of the cyclic three-dimensional Frobenius manifold $(\mathcal M,\bv)$. For the potential~\eqref{eq:potential} the vector field $\bv=e_2=\partial/\partial t^2$ is cyclic everywhere by Lemma~\ref{lem:MN}, and~\eqref{eq:chiM} gives $s_1=b$, $s_2=-2a$, $s_3=c$. Therefore the spectral construction gives
\begin{equation}\label{eq:spectral-k}
k_2=4b,\qquad k_4=4a,\qquad k_6=4c,\qquad k_8=4(a^2-bc).
\end{equation}

\subsection{The universal associativity correspondence}

\begin{theorem}\label{thm:main-correspondence}
For $(a,b,c,d)\in\C^4$ put
\begin{equation}\label{eq:k-abcd}
k_2=4b,\qquad k_4=4a,\qquad k_6=4c,\qquad k_8=4d .
\end{equation}
Then $4k_8-k_4^2+k_2k_6=-2^4(a^2-bc-d)$, and the following conditions are equivalent:
\begin{enumerate}
\item[\textup{(a)}] the WDVV equation $\Phi=0$ holds, i.e.\ $A_{(a,b,c,d)}$ is associative;
\item[\textup{(b)}] the multiplication operators commute, $[L_{e_2},L_{e_3}]=0$;
\item[\textup{(c)}] the polynomial $B_k$ with $k$ given by~\eqref{eq:k-abcd} defines an associative two-valued group;
\item[\textup{(d)}] $B_k$ coincides with the spectral two-valued law of $(A_{(a,b,c,d)},e_2)$, i.e.\ the spectrally determined fourth coefficient $4(a^2-bc)$ equals $4d$.
\end{enumerate}
\end{theorem}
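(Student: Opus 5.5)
The plan is to reduce all four conditions to the single scalar equation $\Phi=0$. First, the identity
\[
4k_8-k_4^2+k_2k_6=16d-16a^2+16bc=-2^4(a^2-bc-d)=-2^4\Phi
\]
follows at once by substituting~\eqref{eq:k-abcd}. Every equivalence will then be checked against (a).

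The equivalence (a)$\Leftrightarrow$(b) is exactly Lemma~\ref{lem:MN}: the commutator $[M,N]$ is $\Phi$ times a fixed nonzero matrix, so it vanishes iff $\Phi=0$. By Lemma~\ref{lem:assoc}, $\Phi=0$ is in turn the associativity of $A_{(a,b,c,d)}$.

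For (a)$\Leftrightarrow$(c) I would invoke Theorem~\ref{thm:universal}. The law $B_k$ is associative iff $4k_8-k_4^2+k_2k_6=0$, and by the identity above this holds iff $\Phi=0$. Only necessity in Theorem~\ref{thm:universal} needs the resultant computation; sufficiency is Corollary~\ref{cor:sufficiency}, and both are available to us.

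For (a)$\Leftrightarrow$(d), the spectral law of $(A_{(a,b,c,d)},e_2)$ is $B_{k(s)}$ with $s=(b,-2a,c)$, read off from $\chi_M$ in~\eqref{eq:chiM}. Lemma~\ref{lem:MN} gives this characteristic polynomial and the cyclicity of $e_1$ for all parameter values, so no associativity is needed here. By Proposition~\ref{prop:inversion},
\[
k(s)=(4b,\,4a,\,4c,\,4(a^2-bc)),
\]
as recorded in~\eqref{eq:spectral-k}. The first three coefficients always agree with~\eqref{eq:k-abcd}, so $B_k=B_{k(s)}$ iff $4d=4(a^2-bc)$, i.e.\ iff $\Phi=0$.

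One subtle point is what "coincides" means in (d). The polynomials $B_k$ are linearly independent in the coefficients $k$: the monomials $\sigma_3$, $\sigma_1\sigma_3$, $\sigma_2\sigma_3$, $\sigma_3^2$ are independent, and the normalisation $\sigma_1^2-4\sigma_2$ fixes the scalar $c$ of Proposition~\ref{prop:normal-form}. So equality of polynomials means equality of coefficient vectors. If "coincide" is instead taken to mean equal zero sets, I would argue that the coefficient of $\sigma_3^2$ is detected, for example by comparing the values of $Z$ over fixed generic $X,Y$.

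I expect no real obstacle. The only nontrivial input is the necessity half of Theorem~\ref{thm:universal}, which the paper imports from \cite{BK2026}.
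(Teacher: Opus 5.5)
Your proof is correct and follows the paper's argument exactly: the identity by substitution, (a)$\Leftrightarrow$(b) from Lemma~\ref{lem:MN}, (a)$\Leftrightarrow$(c) from the identity together with Theorem~\ref{thm:universal}, and (a)$\Leftrightarrow$(d) from~\eqref{eq:spectral-k}. Your remark that the spectral coefficients depend only on $\chi_M$, and so are defined off the WDVV locus, is a useful clarification the paper leaves implicit; the aside about zero sets is unnecessary, since the statement itself defines (d) as equality of the fourth coefficients.
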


\begin{proof}
The identity is immediate. (a)$\Leftrightarrow$(b) is Lemma~\ref{lem:MN}, (a)$\Leftrightarrow$(c) follows from the identity and Theorem~\ref{thm:universal}, and (a)$\Leftrightarrow$(d) from~\eqref{eq:spectral-k}.
\end{proof}

The equivalence (a)$\Leftrightarrow$(c) was obtained in~\cite{BKR2026}. In words: \emph{for Dubrovin's three-dimensional potential and the universal symmetric $2$-algebraic two-valued group, the WDVV equation is the associativity condition}, and the WDVV equation is exactly the statement that the spectrally determined fourth coefficient equals $4d$.

\section{The WDVV equation as a system of conservation laws}\label{sec:hydro}

\subsection{The hydrodynamic form}

Let $f$ be a solution of~\eqref{eq:WDVV3} on a domain $\Omega\subset\C^2$ and let $k_j(x,y)$ be defined by~\eqref{eq:abcd} and~\eqref{eq:k-abcd}. Equality of mixed derivatives gives
\begin{equation}\label{eq:conservation}
(k_2)_y=(k_4)_x,\quad (k_4)_y=(k_6)_x,\quad (k_6)_y=(k_8)_x,\qquad 4k_8-k_4^2+k_2k_6=0 .
\end{equation}
Eliminating $k_8=(k_4^2-k_2k_6)/4$, we obtain a system of three conservation laws. In the variables $\bu=(b,a,c)^{T}$, for which $d=a^2-bc$, it takes the quasilinear form
\begin{equation}\label{eq:hydro}
\bu_y+\cV(\bu)\,\bu_x=0,\qquad
\cV(\bu)=\begin{pmatrix}0&-1&0\\ 0&0&-1\\ c&-2a&b\end{pmatrix},
\end{equation}
that is, $b_y=a_x$, $a_y=c_x$, $c_y=(a^2-bc)_x$. This hydrodynamic form of the three-dimensional WDVV equation is an integrable Hamiltonian system of hydrodynamic type~\cite{FM1996}. Its velocity matrix has a direct meaning in terms of the spectral structure.

\begin{proposition}\label{prop:velocities}
The characteristic polynomial of $\cV(\bu)$ equals $\chi_M(\mu)=\mu^3-b\mu^2-2a\mu-c$, and $\cV(\bu)$ is conjugate to the multiplication operator $L_{e_2}$. Consequently, the characteristic velocities $dx/dy$ of~\eqref{eq:hydro} are the eigenvalues $\lambda_1,\lambda_2,\lambda_3$ of $L_{e_2}$, i.e.\ the finite branch points of the spectral two-valued law. In the real case the system is strictly hyperbolic exactly where these branch points are real and distinct.
\end{proposition}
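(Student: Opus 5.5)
The plan is to compute the characteristic polynomial of $\cV(\bu)$ directly, then use cyclicity to get the conjugacy with $L_{e_2}$, and finally read off the characteristic velocities and the hyperbolicity statement.

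\emph{Characteristic polynomial.} Expanding $\det(\mu-\cV)$ along the first column gives
\[
\det\begin{pmatrix}\mu&1&0\\0&\mu&1\\-c&2a&\mu-b\end{pmatrix}
=\mu\bigl(\mu(\mu-b)-2a\bigr)-c=\mu^3-b\mu^2-2a\mu-c .
\]
This is exactly $\chi_M$ from Lemma~\ref{lem:MN}.

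\emph{Conjugacy.} Both $\cV$ and $M=L_{e_2}$ are cyclic (nonderogatory).
\begin{itemize}
\item For $M$ this is stated in Lemma~\ref{lem:MN}.
\item For $\cV$, its transpose is a companion-type matrix. Concretely, take $w=(0,0,1)$ as a row vector. Then $w$, $w\cV=(c,-2a,b)$ and $w\cV^2$ are linearly independent, because $w\cV^2$ has first entry $bc$ while $w\cV$ has nonzero second-entry structure; more cleanly, the lower-left minor argument shows that the Krylov matrix is triangular with unit antidiagonal. Hence $\cV^T$ is cyclic, and therefore so is $\cV$.
\end{itemize}
Two cyclic matrices with the same characteristic polynomial are each similar to the companion matrix of that polynomial, hence similar to each other. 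One can write the conjugator explicitly as the change of basis from the Krylov bases.

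This holds for all $(a,b,c)$ and uses no WDVV assumption. That is consistent with the system~\eqref{eq:hydro}, where $d$ has already been eliminated.

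\emph{Characteristic velocities.} For a quasilinear system $\bu_y+\cV\bu_x=0$, a simple-wave or characteristic analysis (seeking discontinuities of derivatives across curves $x=x(y)$) gives $\det(\cV-\frac{dx}{dy})=0$. So the characteristic velocities are the roots of $\chi_M$, which are the eigenvalues $\lambda_i$ of $L_{e_2}$. By Proposition~\ref{prop:spectral}(iii), with $\bv=e_2$, which is cyclic everywhere, and with $s=(b,-2a,c)$, these are the finite branch points of the spectral law.

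\emph{Hyperbolicity.} Strict hyperbolicity for real $\bu$ means that $\cV$ has real, distinct eigenvalues. This is exactly the condition that the $\lambda_i$ are real and pairwise distinct. No diagonalisability issue arises in this case, since distinct eigenvalues already guarantee it.

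The only mildly delicate point is the cyclicity check for $\cV$, which is needed to upgrade "same characteristic polynomial" to "conjugate" on the locus of repeated roots. I expect it to be straightforward.
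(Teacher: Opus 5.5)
Your route is the paper's route: expand $\det(\mu-\cV)$ along the first column, observe that $\cV$ and $L_{e_2}$ are both nonderogatory, so equal characteristic polynomials force conjugacy, and read off the rest from Proposition~\ref{prop:spectral}(iii) with $s=(b,-2a,c)$. Your parts on the characteristic velocities and strict hyperbolicity are fine.

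\textbf{The cyclicity check for $\cV$ is wrong.} This is the one step you flag as delicate. For the row vector $w=(0,0,1)$ one gets $w\cV=(c,-2a,b)$ and $w\cV^2=(bc,\,-c-2ab,\,2a+b^2)$. The Krylov determinant is then $-c^2$. So that matrix is not anti-triangular with unit antidiagonal, and $w$ is not cyclic on $\{c=0\}$.

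This fails exactly where the argument is needed. On the simple-root locus conjugacy is automatic, but $\{c=0\}$ meets the repeated-root locus. For example, at $a=b=c=0$ the matrix $\cV$ is a single nilpotent Jordan block, yet $w\cV=0$.

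\textbf{The fix is to start from the other end.} Any one of the following works:
\begin{itemize}
\item Take $w=(1,0,0)$. Then $w\cV=(0,-1,0)$ and $w\cV^2=(0,0,1)$, so the Krylov matrix is $\operatorname{diag}(1,-1,1)$ for all $(a,b,c)$.
\item Take the column vector $(0,0,1)^T$. Then $\cV(0,0,1)^T=(0,-1,b)^T$ and $\cV^2(0,0,1)^T=(1,-b,2a+b^2)^T$. The Krylov matrix is anti-triangular with determinant $1$, which is presumably the matrix you had in mind.
\item Note, as the paper does, that $-\cV$ is itself a companion matrix and hence nonderogatory.
\end{itemize}
With this correction your proof is complete.
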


\begin{proof}
Expanding $\det(\mu-\cV)$ along the first column gives $\mu^3-b\mu^2-2a\mu-c=\chi_M(\mu)$. The matrix $-\cV$ is a companion matrix and $e_1$ is cyclic for $M$ (Lemma~\ref{lem:MN}), so both $\cV$ and $M$ are nonderogatory; nonderogatory matrices with the same characteristic polynomial are conjugate. The last assertions follow from Proposition~\ref{prop:spectral}(iii).
\end{proof}

Thus, as the point $p$ moves over the Frobenius manifold, the parameters of the spectral two-valued group evolve according to an integrable system of conservation laws whose characteristic velocities are the branch points of the group itself.

\subsection{Reconstruction of the WDVV potential}

\begin{theorem}\label{thm:reconstruction}
Let $\Omega\subset\C^2$ be a simply connected domain, and let $k_2,k_4,k_6,k_8$ be holomorphic functions on $\Omega$ satisfying~\eqref{eq:conservation}. Then there exists a holomorphic function $f(x,y)$ on $\Omega$, unique up to addition of a polynomial of total degree at most two, such that
\begin{equation}\label{eq:k-from-f}
4f_{xxx}=k_2,\qquad 4f_{xxy}=k_4,\qquad 4f_{xyy}=k_6,\qquad 4f_{yyy}=k_8 .
\end{equation}
The function $f$ satisfies the three-dimensional WDVV equation~\eqref{eq:WDVV3}. Conversely, every normalised WDVV solution on $\Omega$ determines a solution of~\eqref{eq:conservation} by~\eqref{eq:k-from-f}.
\end{theorem}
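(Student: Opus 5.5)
The plan is to integrate three times, using the Poincaré lemma on the simply connected domain $\Omega$ at each stage. Throughout we use that a closed holomorphic $1$-form on a simply connected domain in $\C^2$ is exact; here closedness of $P\,dx+Q\,dy$ means $P_y=Q_x$.

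\emph{Step 1: second derivatives.} The first two equations of~\eqref{eq:conservation} say that the forms $\omega_1=k_2\,dx+k_4\,dy$ and $\omega_2=k_4\,dx+k_6\,dy$ are closed. The third equation says that $\omega_3=k_6\,dx+k_8\,dy$ is closed. Hence there exist holomorphic $g_1,g_2,g_3$ with
\[
dg_1=\omega_1,\qquad dg_2=\omega_2,\qquad dg_3=\omega_3 .
\]
The intended identifications are $g_1=4f_{xx}$, $g_2=4f_{xy}$ and $g_3=4f_{yy}$, each determined only up to an additive constant.

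\emph{Step 2: adjusting constants and first derivatives.} For $f_x$ we need the form $g_1\,dx+g_2\,dy$ to be closed, i.e.\ $(g_1)_y=(g_2)_x$. Both sides equal $k_4$, so $(g_1)_y-(g_2)_x$ is automatically zero. Similarly $(g_2)_y=k_6=(g_3)_x$ holds automatically. Thus the constants in $g_i$ are not constrained here. We obtain $h_1,h_2$ with
\[
dh_1=g_1\,dx+g_2\,dy,\qquad dh_2=g_2\,dx+g_3\,dy .
\]

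\emph{Step 3: the potential.} Finally we need $(h_1)_y=(h_2)_x$. Both sides equal $g_2$, so again this is automatic. We obtain $F$ with $dF=h_1\,dx+h_2\,dy$ and put $f=F/4$. Then~\eqref{eq:k-from-f} holds by construction.

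\emph{Uniqueness.} If $f$ and $\tilde f$ both satisfy~\eqref{eq:k-from-f}, then all third derivatives of $f-\tilde f$ vanish. On the connected set $\Omega$ the second derivatives of $f-\tilde f$ are therefore constant, so its first derivatives are affine, and $f-\tilde f$ is a polynomial of degree at most two. Conversely, adding such a polynomial does not change~\eqref{eq:k-from-f}.

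\emph{WDVV.} Substituting~\eqref{eq:k-from-f} into the algebraic relation $4k_8-k_4^2+k_2k_6=0$ gives
\[
16\bigl(f_{yyy}-f_{xxy}^2+f_{xxx}f_{xyy}\bigr)=0 .
\]
This is exactly~\eqref{eq:WDVV3}, equivalently $\Phi=0$ as in Theorem~\ref{thm:main-correspondence}.

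\emph{Converse.} For a normalised WDVV solution, define $k$ by~\eqref{eq:k-from-f}. The three differential equations in~\eqref{eq:conservation} then follow from equality of mixed partial derivatives of $f$. The algebraic equation is $-2^4\Phi=0$, by the identity in Theorem~\ref{thm:main-correspondence}.

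No step is a genuine obstacle. The only point requiring care is that the compatibility conditions at Steps 2 and 3 hold automatically, regardless of the integration constants. This works because each successive form has coefficients whose cross-derivatives are the same previously integrated function.
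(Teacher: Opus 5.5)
Your proposal is correct and follows the paper's proof essentially step for step. The paper also applies the Poincar\'e lemma three times, to the forms in $k_j$, then $g_j$, then $h_j$, with automatic compatibility conditions, and then uses the same uniqueness argument, the identity $4k_8-k_4^2+k_2k_6=16(f_{yyy}-f_{xxy}^2+f_{xxx}f_{xyy})$, and mixed partials for the converse.
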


\begin{proof}
By~\eqref{eq:conservation} the one-forms $k_2dx+k_4dy$, $k_4dx+k_6dy$, $k_6dx+k_8dy$ are closed, so on $\Omega$ they are the differentials of holomorphic functions $g_1,g_2,g_3$. Then $(g_1)_y=k_4=(g_2)_x$ and $(g_2)_y=k_6=(g_3)_x$, so $g_1dx+g_2dy=dh_1$ and $g_2dx+g_3dy=dh_2$ for some $h_1,h_2$. Since $(h_1)_y=g_2=(h_2)_x$, we have $h_1dx+h_2dy=d(4f)$ for some $f$. By construction $4f_{xx}=g_1$, $4f_{xy}=g_2$, $4f_{yy}=g_3$, which gives~\eqref{eq:k-from-f}. If two functions satisfy~\eqref{eq:k-from-f}, all third derivatives of their difference vanish, so the difference is a polynomial of degree at most two. Finally, $4k_8-k_4^2+k_2k_6=16\,(f_{yyy}-f_{xxy}^2+f_{xxx}f_{xyy})$, so the relation in~\eqref{eq:conservation} is the WDVV equation. The converse is the computation preceding~\eqref{eq:conservation}.
\end{proof}

The same statement holds, with the same proof, for real-analytic or smooth functions on a simply connected domain in $\mathbb R^2$.

\subsection{Reconstruction from the logarithm or the exponential}

For a solution $f$ of~\eqref{eq:WDVV3} and a point $p$, let $\LL_p$ and $\EE_p$ be the logarithm and the exponential (Proposition~\ref{prop:log}) of the spectral two-valued group at $p$, i.e.\ with $k=k(p)$ given by~\eqref{eq:spectral-k}. Since $k_2/4=b$, $k_4/2=2a$, $k_6/4=c$,
\begin{equation}\label{eq:Lp}
\LL_p(X)=\left(\int_0^{\sqrt X}\frac{dr}{\sqrt{1-br^2-2ar^4-cr^6}}\right)^2,
\end{equation}
and $1/\EE_p(S)=\wp(\sqrt S;g_2,g_3)+b/3$ with $g_2=8a+\frac43b^2$, $g_3=4c+\frac83ab+\frac8{27}b^3$.

\begin{proposition}\label{prop:jets}
Write $\LL_p(X)=X+\ell_1X^2+\ell_2X^3+\ell_3X^4+O(X^5)$ and $\EE_p(S)=S+\varepsilon_1S^2+\varepsilon_2S^3+\varepsilon_3S^4+O(S^5)$. Then
\[
\begin{gathered}
\ell_1=\tfrac b3,\qquad \ell_2=\tfrac{2a}5+\tfrac{8b^2}{45},\qquad \ell_3=\tfrac c7+\tfrac{52ab}{105}+\tfrac{4b^3}{35},\\
\varepsilon_1=-\tfrac b3,\qquad \varepsilon_2=-\tfrac{2a}5+\tfrac{2b^2}{45},\qquad \varepsilon_3=-\tfrac c7+\tfrac{6ab}{35}-\tfrac{b^3}{315},
\end{gathered}
\]
and conversely
\[
b=3\ell_1,\qquad a=\tfrac52\ell_2-4\ell_1^2,\qquad c=7\ell_3-26\ell_1\ell_2+20\ell_1^3,\qquad d=a^2-bc,
\]
\[
b=-3\varepsilon_1,\qquad a=\varepsilon_1^2-\tfrac52\varepsilon_2,\qquad c=-3\varepsilon_1^3+9\varepsilon_1\varepsilon_2-7\varepsilon_3 .
\]
In particular, the segment $X+\ell_1X^2+\ell_2X^3+\ell_3X^4$ of $\LL_p$ (or the corresponding segment of $\EE_p$) determines the tangent Frobenius algebra $T_p\mathcal M$ completely.
\end{proposition}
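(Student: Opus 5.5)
The plan is a direct power-series computation, organised so that the explicit formula for $\LL_p$ does most of the work and $\EE_p$ follows by series reversion.

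\emph{Step 1: expand $\LL_p$.} Put $q(r)=br^2+2ar^4+cr^6$ and expand the integrand of \eqref{eq:Lp} with the binomial series:
\[
(1-q)^{-1/2}=1+\tfrac12q+\tfrac38q^2+\tfrac5{16}q^3+\dots .
\]
Collecting powers of $r$ up to $r^6$ gives
\[
1+\tfrac b2r^2+\bigl(a+\tfrac38b^2\bigr)r^4+\bigl(\tfrac c2+\tfrac32ab+\tfrac5{16}b^3\bigr)r^6+O(r^8).
\]
Integrating termwise from $0$ to $\sqrt X$ yields $\sqrt X\,(1+\beta_1X+\beta_2X^2+\beta_3X^3+\dots)$, where $\beta_1=b/6$, $\beta_2=(a+\frac38b^2)/5$ and $\beta_3=(\frac c2+\frac32ab+\frac5{16}b^3)/7$. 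Squaring gives
\[
\ell_1=2\beta_1,\qquad \ell_2=2\beta_2+\beta_1^2,\qquad \ell_3=2\beta_3+2\beta_1\beta_2 .
\]
Simplifying the rational coefficients should reproduce the stated values of $\ell_1,\ell_2,\ell_3$.

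\emph{Step 2: obtain $\EE_p$.} By Proposition~\ref{prop:log}(ii), $\EE_p$ is the compositional inverse of $\LL_p$. Standard series reversion gives
\[
\varepsilon_1=-\ell_1,\qquad \varepsilon_2=2\ell_1^2-\ell_2,\qquad \varepsilon_3=-5\ell_1^3+5\ell_1\ell_2-\ell_3 .
\]
Substituting the values from Step 1 should give the stated $\varepsilon_i$. As an independent check, one can instead expand $1/(\wp(u)+b/3)$ using $\wp=u^{-2}+\frac{g_2}{20}u^2+\frac{g_3}{28}u^4+\dots$ with the given $g_2,g_3$; to order $S^4$ this requires one further Laurent coefficient of $\wp$, namely $g_2^2/1200$ at $u^6$.

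\emph{Step 3: invert the triangular system.} Each $\ell_j$ is linear in one new parameter, in the sequence $b,a,c$, plus a polynomial in the earlier ones. The system can therefore be solved successively:
\[
b=3\ell_1,\qquad a=\tfrac52\bigl(\ell_2-\tfrac{8b^2}{45}\bigr),\qquad c=7\bigl(\ell_3-\tfrac{52ab}{105}-\tfrac{4b^3}{35}\bigr),
\]
each time substituting the previously found expressions. The same procedure applies to the $\varepsilon_j$. Finally $d=a^2-bc$ holds by WDVV (Lemma~\ref{lem:assoc}), so the table \eqref{eq:table}, and hence $T_p\mathcal M$, is determined.

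The main obstacle is bookkeeping rather than any conceptual difficulty. The delicate point is the $X^4$ coefficient: it needs the cross term $\frac38\cdot 2\cdot b\cdot 2a$ from $q^2$ and the $\frac5{16}b^3$ term from $q^3$, together with correct reversion to third order. I would verify the final formulas numerically at a sample point, for example $b=1,a=c=0$, where the integrand is $(1-r^2)^{-1/2}$ and $\LL_p(X)=(\arcsin\sqrt X)^2$ can be compared with the stated $\ell_j$.
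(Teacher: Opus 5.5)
Your proposal is correct and follows the paper's proof: expand the integrand of \eqref{eq:Lp}, integrate termwise, square, revert the series, and solve the resulting triangular system for $b,a,c$. Your intermediate $\beta_1,\beta_2,\beta_3$ and your reversion formulas do reproduce every stated coefficient (e.g.\ $\ell_3=2\beta_3+2\beta_1\beta_2$ gives $\tfrac37+\tfrac1{15}=\tfrac{52}{105}$ and $\tfrac5{56}+\tfrac1{40}=\tfrac4{35}$); the only slip is in your optional $\wp$-check, where the $S^4$ coefficient of $1/(\wp(\sqrt S)+b/3)$ needs the Laurent terms only through $\tfrac{g_3}{28}u^4$, because the $\tfrac{g_2^2}{1200}u^6$ term first contributes at order $S^5$.
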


\begin{proof}
Expand the integrand in~\eqref{eq:Lp}, integrate termwise and square; the formulas for $\varepsilon_j$ follow by inverting the series, and the inverse formulas are verified by substitution.
\end{proof}

\begin{corollary}[local inverse spectral theorem]\label{cor:inverse}
Let $f$ be a solution of~\eqref{eq:WDVV3} on a simply connected domain $\Omega$. The family $p\mapsto\LL_p$ recovers the functions $a,b,c,d$ on $\Omega$ and hence, by Theorem~\ref{thm:reconstruction}, the potential $f$ uniquely modulo polynomials of degree at most two. Thus
\[
\{\LL_p\}_{p\in\Omega}\ \Longleftrightarrow\ \{\EE_p\}_{p\in\Omega}\ \Longleftrightarrow\ f(x,y)\pmod{\deg\le2}.
\]
\end{corollary}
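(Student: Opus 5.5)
The plan is to derive the corollary by composing Proposition~\ref{prop:jets} pointwise with Theorem~\ref{thm:reconstruction}, and then to close the chain of equivalences with the inversion statement of Proposition~\ref{prop:log}(ii).

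First, fix $p\in\Omega$. By Proposition~\ref{prop:jets}, the Taylor coefficients $\ell_1,\ell_2,\ell_3$ of $\LL_p$ at $X=0$ give
\[
b=3\ell_1,\qquad a=\tfrac52\ell_2-4\ell_1^2,\qquad c=7\ell_3-26\ell_1\ell_2+20\ell_1^3 .
\]
Since $f$ solves~\eqref{eq:WDVV3}, we also have $d=a^2-bc$. Hence the map $p\mapsto\LL_p$ determines the four functions $a,b,c,d$ on $\Omega$, which are $f_{xxy},f_{xxx},f_{xyy},f_{yyy}$. These functions are holomorphic in $p$: the $\ell_j$ are polynomials in $a,b,c$, and the inverse formulas are polynomial as well. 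The same argument works for $\EE_p$, using the inverse formulas in the $\varepsilon_j$.

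Second, set $k=4(b,a,c,d)$ as in~\eqref{eq:k-abcd}. The mixed-derivative identities give~\eqref{eq:conservation}, and the relation $4k_8-k_4^2+k_2k_6=0$ is exactly WDVV, as computed in the proof of Theorem~\ref{thm:reconstruction}. That theorem then produces a function with the prescribed third derivatives, unique up to a polynomial of degree at most two. Since $f$ itself has these third derivatives, the recovered function is $f$ modulo $\deg\le2$. Conversely, $f$ modulo $\deg\le2$ determines all third derivatives, hence $k(p)$ via~\eqref{eq:spectral-k}, and hence $\LL_p$ and $\EE_p$ by the explicit formulas~\eqref{eq:Lp} and~\eqref{eq:exp}.

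Third, the equivalence $\{\LL_p\}\Leftrightarrow\{\EE_p\}$ holds because the two series are compositional inverses (Proposition~\ref{prop:log}(ii)). Each therefore determines the other pointwise.

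I expect no real obstacle. The only point needing care is that the correspondence genuinely uses WDVV: $d$ does not appear in $\LL_p$ and is recovered only through $\Phi=0$. Simple connectivity of $\Omega$ is needed exactly where Theorem~\ref{thm:reconstruction} integrates the closed one-forms.
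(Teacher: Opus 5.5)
Your proposal is correct and follows the same route the paper intends. You read off $a,b,c$ from $\ell_1,\ell_2,\ell_3$ (or from $\varepsilon_1,\varepsilon_2,\varepsilon_3$) via Proposition~\ref{prop:jets}, recover $d=a^2-bc$ from WDVV, apply the existence and uniqueness parts of Theorem~\ref{thm:reconstruction}, and use $\LL_p(\EE_p(S))=S$ to pass between the two families; your remark that $d$ is recovered only through $\Phi=0$ matches the paper's observation that $k_8$ does not enter $\LL$.
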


\section{Solutions of the quantum Yang--Baxter equation}\label{sec:qybe}

\subsection{An \texorpdfstring{$R$}{R}-operator for unital algebras}

Let $\cA$ be a unital algebra over $\C$, not necessarily associative or commutative, with unit $1$ and associator $(\xi,\eta,\zeta)=(\xi\eta)\zeta-\xi(\eta\zeta)$. Let $\tau(\xi\otimes\eta)=\eta\otimes\xi$ be the flip. For constants $\alpha,\beta\in\C$ and spectral parameters $U,V\in\C$ define $R(U,V)=R_{\alpha,\beta}(U,V)\in\operatorname{End}(\cA\otimes\cA)$ by
\begin{equation}\label{eq:Rop}
R(U,V)(\xi\otimes\eta)=\alpha(U-V)\,1\otimes\xi\eta+\beta(U-V)\,\xi\eta\otimes1-(\alpha U-\beta V)\,\eta\otimes\xi .
\end{equation}
As usual, $R_{12}=R\otimes\id$, $R_{23}=\id\otimes R$ and $R_{13}=\tau_{23}R_{12}\tau_{23}$ act on $\cA^{\otimes3}$; thus $R_{ij}$ acts by~\eqref{eq:Rop} on the $i$-th and $j$-th tensor factors. The two-parameter spectral-parameter operator~\eqref{eq:Rop}, for an arbitrary unital associative algebra $\cA$, is the coloured Yang--Baxter operator of Nichita and Parashar~\cite{NP2006}; see Remark~\ref{rmk:NP} below for the relation between their construction and Theorem~\ref{thm:main}. The following theorem gives, in addition, the exact Yang--Baxter defect for an arbitrary unital algebra. Put
\[
\Delta(U,V,W)=(U-V)(U-W)(V-W),
\]
and define $\Ass_{\alpha,\beta}\in\operatorname{End}(\cA^{\otimes3})$ by
\begin{equation}\label{eq:Ass}
\Ass_{\alpha,\beta}(\xi\otimes\eta\otimes\zeta)=\alpha^2\,1\otimes1\otimes(\xi,\eta,\zeta)+\alpha\beta\,1\otimes(\xi,\eta,\zeta)\otimes1+\beta^2\,(\xi,\eta,\zeta)\otimes1\otimes1 .
\end{equation}

\begin{theorem}\label{thm:defect}
For all $U,V,W\in\C$:
\begin{enumerate}
\item[\textup{(i)}] (Yang--Baxter defect)
\begin{multline}\label{eq:defect}
R_{12}(U,V)R_{13}(U,W)R_{23}(V,W)-R_{23}(V,W)R_{13}(U,W)R_{12}(U,V)\\
=-(\alpha+\beta)\,\Delta(U,V,W)\,\Ass_{\alpha,\beta};
\end{multline}
\item[\textup{(ii)}] (regularity) $R(U,U)=(\beta-\alpha)U\,\tau$;
\item[\textup{(iii)}] (unitarity) $R(U,V)\,\tau R(V,U)\tau=(\alpha U-\beta V)(\alpha V-\beta U)\,\id$;
\item[\textup{(iv)}] (homogeneity) $R(cU,cV)=c\,R(U,V)$ for $c\in\C$.
\end{enumerate}
\end{theorem}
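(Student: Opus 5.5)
Parts (ii)--(iv) are immediate from~\eqref{eq:Rop}, so I dispose of them first. For (ii), setting $V=U$ kills the first two terms, and the last term gives $-(\alpha-\beta)U\,\eta\otimes\xi=(\beta-\alpha)U\,\tau(\xi\otimes\eta)$. For (iv), every coefficient in~\eqref{eq:Rop} is linear in $(U,V)$.

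For (iii), I write $R(U,V)=(U-V)\,m'-(\alpha U-\beta V)\tau$, where $m'(\xi\otimes\eta)=\alpha\,1\otimes\xi\eta+\beta\,\xi\eta\otimes1$. Then $\tau R(V,U)\tau=(V-U)\,\tau m'\tau-(\alpha V-\beta U)\tau$. I expand the product and use two facts: $m'\tau$ and $\tau m'\tau$ are built from $\eta\xi$ and the unit, and $m'$ applied to $1\otimes x$ or $x\otimes 1$ is again controlled by the unit axiom, e.g.\ $m'(1\otimes\xi\eta)=\alpha 1\otimes\xi\eta+\beta\xi\eta\otimes1$. The cross terms should cancel, leaving $(\alpha U-\beta V)(\alpha V-\beta U)\id$. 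One point is worth flagging: the terms proportional to $(U-V)^2$ involve $m'\tau m'\tau$, and these cancel against the $(U-V)$-linear cross terms only after using $1\cdot x=x\cdot 1=x$. Commutativity is not needed, because only the products $\xi\eta$ and $\eta\xi$ appear symmetrically. I will check this bookkeeping on a generic basis element.

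For (i) the plan is a direct expansion, which the paper defers to~\ref{app:expansion}. I apply both triple products to $\xi\otimes\eta\otimes\zeta$. Each $R_{ij}$ is a sum of three terms, namely the $\alpha$-fusion term, the $\beta$-fusion term and the flip term, so each side has $27$ contributions. Every contribution is a multiple of a tensor of one of four kinds:
\begin{itemize}
\item a permutation of $\xi\otimes\eta\otimes\zeta$;
\item a tensor with one unit and one product of two letters;
\item a tensor with two units and a product of all three letters, in some bracketing and order;
\item a tensor with two units and a product of two letters, where the third letter has been fused with a unit.
\end{itemize}
I organise the computation by these types. The pure-flip terms cancel by the braid relation $\tau_{12}\tau_{13}\tau_{23}=\tau_{23}\tau_{13}\tau_{12}$, which is a statement about permutations only. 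The mixed terms contain a single product; I expect them to cancel using only the unit axiom together with polynomial identities in $U,V,W$. The terms with two fusions produce triple products placed in the three possible positions of the units, with coefficients $\alpha^2$, $\alpha\beta$ and $\beta^2$. The left side yields one bracketing, $(\xi\eta)\zeta$, and the right side yields $\xi(\eta\zeta)$, possibly with reordered letters. Since there is no commutativity, I must track the ordering carefully.

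The main obstacle is showing that, after all lower-order cancellations, the coefficient of each triple-product tensor collects to exactly $-(\alpha+\beta)\Delta(U,V,W)$ times the associator. Two difficulties arise here. First, the reordered products such as $\eta\xi\zeta$ must cancel between the two sides rather than leave non-associator terms. Second, the cubic polynomial in $U,V,W$ must factor as $(\alpha+\beta)\Delta$. To handle this I will use the homogeneity from (iv) and the antisymmetry that the defect must have. The defect vanishes at $U=V$, $V=W$ and $U=W$ by regularity (ii) combined with the braid relation, which forces the factor $\Delta$. The remaining linear factor in $\alpha,\beta$ I then fix by comparing a single coefficient, for example the coefficient of $1\otimes1\otimes(\xi\eta)\zeta$. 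Everything else I verify by the symbolic expansion, cross-checked in SymPy on a generic nonassociative unital algebra.
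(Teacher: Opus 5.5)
Your treatment of (ii)--(iv) matches the paper's. Your operator form of (iii), where everything reduces to $m'\tau m'\tau=(\alpha+\beta)\,m'\tau$ by the unit axiom, is a tidy repackaging of the same computation. For (i) your fallback is the same direct expansion on a generic pure tensor that the paper carries out in its appendix.

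What you add is a structural shortcut. The defect is homogeneous cubic in $(U,V,W)$ and vanishes on the three diagonals, so it equals $\Delta(U,V,W)\,T_{\alpha,\beta}$ with $T_{\alpha,\beta}$ independent of the spectral parameters. This is correct and buys something real: each coordinate of $T_{\alpha,\beta}$ can be read off from one monomial such as $U^2V$, whereas the paper sees $\Delta$ emerge only at the end of its bookkeeping. Two of your justifications need repair, though.
\begin{itemize}
\item The diagonals $U=V$ and $V=W$ do follow from (ii) plus relabelling identities such as $\tau_{12}R_{13}\tau_{12}=R_{23}$; no antisymmetry is needed. The diagonal $U=W$ does not. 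After inserting $R_{13}(U,U)=(\beta-\alpha)U\tau_{13}$ and moving $\tau_{13}$ to the right, you must compare $R_{12}(U,V)\tau_{12}R_{12}(V,U)\tau_{12}$ with $R_{23}(V,U)\tau_{23}R_{23}(U,V)\tau_{23}$. These act on different pairs of factors and agree only because unitarity (iii) makes both the scalar $(\alpha U-\beta V)(\alpha V-\beta U)$.
\item The factorisation says nothing about the shape of $T_{\alpha,\beta}$, so a single coefficient cannot fix it. You still have to show that the seven coordinates without triple products vanish, and compute all six triple-product coordinates.
\end{itemize}

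Separately, your expected picture of the triple products is wrong and would mislead the bookkeeping. First, you have the sides reversed: the left side, where $R_{23}$ acts first, is where $x(yz)$ arises from fusing $yz$ first. Second, each side in fact produces both bracketings. On the left, flipping in $(23)$, fusing in $(13)$ with the product in slot $1$, then fusing in $(12)$ gives $1\otimes(xy)z\otimes1$ and $(xy)z\otimes1\otimes1$. Their coefficients are $-\alpha\beta(U-V)(U-W)(\alpha V-\beta W)$ and $-\beta^2(U-V)(U-W)(\alpha V-\beta W)$.

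Consequently the factor $(\alpha+\beta)\Delta$ is not visible side by side. For example, $1\otimes1\otimes x(yz)$ has coefficient $\alpha^2(U-W)(V-W)(\beta U-\alpha V)$ on the left and $-\alpha^2(U-W)(V-W)(\alpha U-\beta V)$ on the right. Only their difference equals $\alpha^2(\alpha+\beta)\Delta$.

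On the other hand, the reordered products you worry about never occur. The only products in the full expansion are $xy$, $yz$, $x(yz)$ and $(xy)z$.

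With these corrections your plan goes through and is in substance the paper's proof, with the diagonal-vanishing argument as a genuine labour-saving addition.
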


\begin{proof}
(ii) and (iv) are immediate from~\eqref{eq:Rop}.

(iii) We have $\tau R(V,U)\tau(\xi\otimes\eta)=\alpha(V-U)\,\eta\xi\otimes1+\beta(V-U)\,1\otimes\eta\xi-(\alpha V-\beta U)\,\eta\otimes\xi$, and by~\eqref{eq:Rop}
\begin{align*}
R(U,V)(\eta\xi\otimes1)&=(\beta-\alpha)V\,1\otimes\eta\xi+\beta(U-V)\,\eta\xi\otimes1,\\
R(U,V)(1\otimes\eta\xi)&=\alpha(U-V)\,1\otimes\eta\xi+(\beta-\alpha)U\,\eta\xi\otimes1,\\
R(U,V)(\eta\otimes\xi)&=\alpha(U-V)\,1\otimes\eta\xi+\beta(U-V)\,\eta\xi\otimes1-(\alpha U-\beta V)\,\xi\otimes\eta .
\end{align*}
Collecting terms, the coefficient of $1\otimes\eta\xi$ is $-\alpha(U-V)\bigl[(\beta-\alpha)V+\beta(U-V)+\alpha V-\beta U\bigr]=0$, the coefficient of $\eta\xi\otimes1$ is $-\beta(U-V)\bigl[\alpha(U-V)+(\beta-\alpha)U+\alpha V-\beta U\bigr]=0$, and the coefficient of $\xi\otimes\eta$ is $(\alpha V-\beta U)(\alpha U-\beta V)$.

(i) First we reduce the statement to a universal algebra. Let $\mathcal F$ be the free unital (nonassociative, noncommutative) algebra on $x,y,z$: its basis consists of $1$ and all bracketed monomials in $x,y,z$, and $1$ is a two-sided unit. For a unital algebra $\cA$ and $\xi,\eta,\zeta\in\cA$ there is a unique unital homomorphism $\varphi\colon\mathcal F\to\cA$ with $\varphi(x)=\xi$, $\varphi(y)=\eta$, $\varphi(z)=\zeta$. Since $R$ is expressed through the multiplication, the unit and the flip, $\varphi^{\otimes2}R^{\mathcal F}(U,V)=R^{\cA}(U,V)\varphi^{\otimes2}$, hence $\varphi^{\otimes3}$ intertwines $R_{ij}^{\mathcal F}$ with $R_{ij}^{\cA}$ and $\Ass^{\mathcal F}_{\alpha,\beta}$ with $\Ass^{\cA}_{\alpha,\beta}$. Therefore it suffices to verify~\eqref{eq:defect} on $x\otimes y\otimes z\in\mathcal F^{\otimes3}$, and by trilinearity it then holds on all of $\cA^{\otimes3}$. The verification is a direct expansion: each side of~\eqref{eq:defect} applied to $x\otimes y\otimes z$ is a combination of twelve basis tensors of $\mathcal F^{\otimes3}$, listed in~\ref{app:expansion}. Seven of them, which do not contain triple products, have equal coefficients on both sides; the remaining ones give
\[
\begin{aligned}
(\alpha+\beta)\Delta(U,V,W)\bigl[&\alpha^2\,1\otimes1\otimes(x(yz)-(xy)z)+\alpha\beta\,1\otimes(x(yz)-(xy)z)\otimes1\\
&+\beta^2\,(x(yz)-(xy)z)\otimes1\otimes1\bigr],
\end{aligned}
\]
which is the right-hand side of~\eqref{eq:defect} at $x\otimes y\otimes z$.
\end{proof}

\begin{corollary}\label{cor:associative}
If $\cA$ is associative, then $R_{\alpha,\beta}(U,V)$ satisfies the quantum Yang--Baxter equation with spectral parameters
\begin{equation}\label{eq:QYBE}
R_{12}(U,V)R_{13}(U,W)R_{23}(V,W)=R_{23}(V,W)R_{13}(U,W)R_{12}(U,V)
\end{equation}
for all $\alpha,\beta,U,V,W$. If $\alpha+\beta=0$, \eqref{eq:QYBE} holds for every unital algebra $\cA$.
\end{corollary}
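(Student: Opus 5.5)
The corollary follows from Theorem~\ref{thm:defect}(i); nothing beyond that theorem is needed. The plan is to read off when the right-hand side of~\eqref{eq:defect} vanishes identically. Since the left-hand side of~\eqref{eq:defect} is exactly the difference of the two sides of~\eqref{eq:QYBE}, the quantum Yang--Baxter equation holds precisely when $(\alpha+\beta)\,\Delta(U,V,W)\,\Ass_{\alpha,\beta}=0$ as an operator on $\cA^{\otimes3}$.

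For the first assertion, suppose $\cA$ is associative. Then the associator $(\xi,\eta,\zeta)$ vanishes for all $\xi,\eta,\zeta\in\cA$. Each of the three summands in the definition~\eqref{eq:Ass} of $\Ass_{\alpha,\beta}(\xi\otimes\eta\otimes\zeta)$ contains $(\xi,\eta,\zeta)$ as a tensor factor, so all three summands vanish. Because $\Ass_{\alpha,\beta}$ is defined on decomposable tensors and extended by trilinearity, it is the zero operator, whatever the values of $\alpha,\beta$. Hence the right-hand side of~\eqref{eq:defect} is zero for all $U,V,W$, and~\eqref{eq:QYBE} follows.

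For the second assertion, suppose $\alpha+\beta=0$. Then the scalar prefactor $-(\alpha+\beta)\Delta(U,V,W)$ vanishes for every $U,V,W$. This kills the right-hand side of~\eqref{eq:defect} regardless of $\Ass_{\alpha,\beta}$, so no associativity of $\cA$ is required. Since Theorem~\ref{thm:defect} holds for an arbitrary unital algebra, \eqref{eq:QYBE} holds for every unital $\cA$.

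There is no genuine obstacle here. The only point to keep in mind is that $R_{13}$ is the conjugated operator $\tau_{23}R_{12}\tau_{23}$, exactly as in Theorem~\ref{thm:defect}, so that the two sides of~\eqref{eq:QYBE} literally coincide with the two terms of~\eqref{eq:defect}. All the computational content lives in the expansion underlying Theorem~\ref{thm:defect}(i), which we take as given.
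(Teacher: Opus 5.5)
Your proposal is correct and matches the paper: the corollary is an immediate consequence of Theorem~\ref{thm:defect}(i). The right-hand side of~\eqref{eq:defect} vanishes either because $\Ass_{\alpha,\beta}=0$ when every associator vanishes, or because the prefactor $\alpha+\beta$ is zero.
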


\begin{remark}\label{rem:alpha-beta}
For $\alpha=\beta$ the family degenerates to $R(U,V)=\alpha(U-V)R_0$ with the constant operator $R_0(\xi\otimes\eta)=1\otimes\xi\eta+\xi\eta\otimes1-\eta\otimes\xi$, which by Corollary~\ref{cor:associative} satisfies the constant QYBE for associative $\cA$. For $\alpha+\beta=0$ the QYBE holds but does not detect associativity. The interesting range is therefore $\alpha\ne\pm\beta$.
\end{remark}

\subsection{The bielliptic WDVV \texorpdfstring{$R$}{R}-matrix}

Let $p$ be a point of a three-dimensional Frobenius manifold with the potential~\eqref{eq:potential}, or more generally any $(a,b,c,d)\in\C^4$, and let $A_p$ be the algebra~\eqref{eq:table} with the values~\eqref{eq:abcd} at $p$. We apply the construction of the previous subsection to $\cA=A_p$ and write
\[
\delta=U-V,\qquad A_{UV}=\beta U-\alpha V,\qquad B_{UV}=\alpha U-\beta V,\qquad \gamma=\alpha+\beta .
\]

\begin{theorem}\label{thm:main}
\begin{enumerate}
\item[\textup{(i)}] In the ordered basis $(e_1\otimes e_1,\,e_1\otimes e_2,\,e_1\otimes e_3,\,e_2\otimes e_1,\,e_2\otimes e_2,\,e_2\otimes e_3,\,e_3\otimes e_1,\,e_3\otimes e_2,\,e_3\otimes e_3)$ the operator $R^{\mathrm{bi}}_{p;\alpha,\beta}(U,V):=R_{\alpha,\beta}(U,V)$ on $A_p\otimes A_p$ has the matrix (the $j$-th column contains the image of the $j$-th basis vector)
\begin{equation}\label{eq:Rmatrix}
\resizebox{0.86\textwidth}{!}{$
\begin{pmatrix}
A_{UV}&0&0&0&\gamma a\delta&\gamma c\delta&0&\gamma c\delta&\gamma d\delta\\
0&\alpha\delta&0&(\beta-\alpha)V&\alpha b\delta&\alpha a\delta&0&\alpha a\delta&\alpha c\delta\\
0&0&\alpha\delta&0&\alpha\delta&0&(\beta-\alpha)V&0&0\\
0&(\beta-\alpha)U&0&\beta\delta&\beta b\delta&\beta a\delta&0&\beta a\delta&\beta c\delta\\
0&0&0&0&-B_{UV}&0&0&0&0\\
0&0&0&0&0&0&0&-B_{UV}&0\\
0&0&(\beta-\alpha)U&0&\beta\delta&0&\beta\delta&0&0\\
0&0&0&0&0&-B_{UV}&0&0&0\\
0&0&0&0&0&0&0&0&-B_{UV}
\end{pmatrix}$}.
\end{equation}
\item[\textup{(ii)}] Write $e_{ijk}=e_i\otimes e_j\otimes e_k$. The Yang--Baxter defect
\[
\mathcal Y=R_{12}(U,V)R_{13}(U,W)R_{23}(V,W)-R_{23}(V,W)R_{13}(U,W)R_{12}(U,V)
\]
vanishes on all basis vectors $e_{ijk}$ except
\begin{align*}
\mathcal Y(e_{223})=-\mathcal Y(e_{322})&=\gamma\Delta\,\Phi\,(\alpha^2+\alpha\beta+\beta^2)\,e_{111},\\
\mathcal Y(e_{332})=-\mathcal Y(e_{233})&=\gamma\Delta\,\Phi\,\bigl(\alpha^2\,e_{112}+\alpha\beta\,e_{121}+\beta^2\,e_{211}\bigr),
\end{align*}
where $\Delta=\Delta(U,V,W)$ and $\Phi=a^2-bc-d$.
\item[\textup{(iii)}] Consequently, if $\alpha+\beta\neq0$ and $U,V,W$ are pairwise distinct, the quantum Yang--Baxter equation~\eqref{eq:QYBE} for $R^{\mathrm{bi}}_{p;\alpha,\beta}$ holds if and only if $\Phi=0$, i.e.\ if and only if the WDVV equation holds at $p$. In particular, for every WDVV point $p$ the family $R^{\mathrm{bi}}_{p;\alpha,\beta}(U,V)$ satisfies~\eqref{eq:QYBE} for all $\alpha,\beta,U,V,W$.
\item[\textup{(iv)}] $R^{\mathrm{bi}}_{p;\alpha,\beta}(U,U)=(\beta-\alpha)U\,\tau$, $R^{\mathrm{bi}}_{p;\alpha,\beta}(U,V)\,\tau R^{\mathrm{bi}}_{p;\alpha,\beta}(V,U)\tau=-A_{UV}B_{UV}\,\id$, and $\det R^{\mathrm{bi}}_{p;\alpha,\beta}(U,V)=-A_{UV}^3B_{UV}^6$.
\end{enumerate}
\end{theorem}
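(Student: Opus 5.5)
The plan is to derive everything from Theorem~\ref{thm:defect}, applied to $\cA=A_p$, together with the associator table of Lemma~\ref{lem:assoc}. Only the determinant in (iv) needs a small separate computation.

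For (i), I will evaluate \eqref{eq:Rop} on each basis tensor $e_i\otimes e_j$, using the multiplication table~\eqref{eq:table}, and read off the columns.
\begin{itemize}
\item For $i=j=1$ one gets $(\alpha+\beta)\delta\,e_{1}\otimes e_1-B_{UV}e_1\otimes e_1=A_{UV}e_1\otimes e_1$.
\item For $e_1\otimes e_j$ with $j\neq1$ the product is $e_j$. This gives $\alpha\delta\,1\otimes e_j+\beta\delta\,e_j\otimes1-B_{UV}\,e_j\otimes e_1$, and $\beta\delta-B_{UV}=(\beta-\alpha)U$. Similarly for $e_j\otimes e_1$.
\item For $i,j\ge2$ the product $e_ie_j$ is expanded by~\eqref{eq:table}. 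Its $e_1$-component produces the coefficient $\gamma$ in the first row. The $e_2,e_3$ components produce $\alpha\delta$ terms in rows $e_1\otimes e_k$ and $\beta\delta$ terms in rows $e_k\otimes e_1$. The flip term produces $-B_{UV}$ at the transposed position.
\end{itemize}
This is a routine bookkeeping check against~\eqref{eq:Rmatrix}.

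For (ii), I will invoke Theorem~\ref{thm:defect}(i): $\mathcal Y=-\gamma\Delta\,\Ass_{\alpha,\beta}$. By Lemma~\ref{lem:assoc}, the associator of basis vectors vanishes unless $(i,j,k)$ is one of $(2,2,3),(3,2,2),(2,3,3),(3,3,2)$, with values $-\Phi e_1,\ \Phi e_1,\ \Phi e_2,\ -\Phi e_2$ respectively. Substituting into~\eqref{eq:Ass} gives the following.
\begin{itemize}
\item For the two tensors with associator $\mp\Phi e_1$, the three summands all become multiples of $e_{111}$, with total coefficient $\pm\gamma\Delta\Phi(\alpha^2+\alpha\beta+\beta^2)$.
\item For the two tensors with associator $\pm\Phi e_2$, the summands are $\alpha^2e_{112}$, $\alpha\beta e_{121}$ and $\beta^2e_{211}$.
\end{itemize}
The signs come out as stated: $\mathcal Y(e_{223})=-\gamma\Delta(-\Phi)(\dots)$ and $\mathcal Y(e_{332})=-\gamma\Delta(-\Phi)(\dots)$.

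Part (iii) follows from (ii). If $\gamma\ne0$ and $\Delta\ne0$, then $\mathcal Y(e_{223})=0$ forces $\Phi(\alpha^2+\alpha\beta+\beta^2)=0$, and $\mathcal Y(e_{332})=0$ forces $\Phi\alpha^2=\Phi\beta^2=0$. Since $\gamma\neq0$, $\alpha$ and $\beta$ are not both zero, so $\Phi=0$. Conversely, $\Phi=0$ makes $A_p$ associative, and Corollary~\ref{cor:associative} gives the QYBE for all parameters.

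For (iv), regularity and unitarity are Theorem~\ref{thm:defect}(ii),(iii), since $(\alpha U-\beta V)(\alpha V-\beta U)=-B_{UV}A_{UV}$. The determinant is the only real computation. I plan to order the basis so that the matrix is block triangular.
\begin{itemize}
\item The columns of $e_i\otimes e_j$ with $i,j\ge2$ have nonzero entries only in the flip position $-B_{UV}$ among the "$\ge2$" rows. Their remaining entries lie in rows involving $e_1$.
\item The submatrix on the four tensors with $i,j\ge2$ is therefore $-B_{UV}$ times a permutation matrix. That permutation swaps $e_2\otimes e_3$ and $e_3\otimes e_2$ and fixes the other two, so it has determinant $-1$. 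The block thus contributes $-B_{UV}^4$.
\item The complementary block on $e_1\otimes e_1$, $e_1\otimes e_j$, $e_j\otimes e_1$ is block diagonal. It consists of the entry $A_{UV}$ and two $2\times2$ blocks $\begin{pmatrix}\alpha\delta&(\beta-\alpha)V\\(\beta-\alpha)U&\beta\delta\end{pmatrix}$. Each has determinant $\alpha\beta\delta^2-(\beta-\alpha)^2UV=-A_{UV}B_{UV}$, which I will verify by expansion.
\end{itemize}
Multiplying the contributions gives $A_{UV}\cdot A_{UV}^2B_{UV}^2\cdot(-B_{UV}^4)=-A_{UV}^3B_{UV}^6$.

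The main obstacle is ensuring the block-triangular structure is genuine, i.e.\ that the "$\ge2$" columns do not feed back into the $e_1$-containing columns. This holds because every column indexed by a tensor containing $e_1$ has zero entries in the four "$\ge2$" rows, as is visible in~\eqref{eq:Rmatrix}.
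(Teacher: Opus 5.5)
Your proposal is correct and follows the paper's proof essentially step for step. You obtain (i) from \eqref{eq:Rop} and \eqref{eq:table}, (ii) and (iii) from Theorem~\ref{thm:defect} and Lemma~\ref{lem:assoc}, and the determinant from the invariant subspace $1\otimes A_p+A_p\otimes1$, which is exactly your block-triangular ordering. There is one harmless slip: each $2\times2$ block has determinant $\alpha\beta\delta^2-(\beta-\alpha)^2UV=+A_{UV}B_{UV}$, not $-A_{UV}B_{UV}$, and the final answer is unaffected because this factor enters squared.
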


\begin{proof}
(i) follows from~\eqref{eq:Rop} and~\eqref{eq:table}; for example, $R(U,V)(e_1\otimes e_1)=(\alpha\delta+\beta\delta-B_{UV})e_1\otimes e_1=A_{UV}e_1\otimes e_1$, and $R(U,V)(e_2\otimes e_2)=\alpha\delta\,e_1\otimes e_2^2+\beta\delta\,e_2^2\otimes e_1-B_{UV}\,e_2\otimes e_2$ with $e_2^2=ae_1+be_2+e_3$ gives the fifth column.
(ii) follows from Theorem~\ref{thm:defect}(i) and Lemma~\ref{lem:assoc}.
(iii) If $\gamma\Delta\ne0$ and $\Phi\ne0$, then $\mathcal Y(e_{233})\neq0$, because $e_{112}$, $e_{121}$, $e_{211}$ are linearly independent and $(\alpha,\beta)\ne(0,0)$.
(iv) The first two identities are Theorem~\ref{thm:defect}(ii),(iii). For the determinant, the subspace $W_0=1\otimes A_p+A_p\otimes1$ spanned by $e_1\otimes e_1,e_1\otimes e_2,e_1\otimes e_3,e_2\otimes e_1,e_3\otimes e_1$ is invariant. On $W_0$ the operator has the eigenvalue $A_{UV}$ on $e_1\otimes e_1$ and two blocks $\begin{pmatrix}\alpha\delta&(\beta-\alpha)V\\(\beta-\alpha)U&\beta\delta\end{pmatrix}$ of determinant $\alpha\beta\delta^2-(\beta-\alpha)^2UV=A_{UV}B_{UV}$. On the quotient $A_p^{\otimes2}/W_0$ it acts as $-B_{UV}\tau$, whose determinant is $-B_{UV}^4$. Hence $\det R=A_{UV}(A_{UV}B_{UV})^2(-B_{UV}^4)$.
\end{proof}

Thus the same relation $a^2-bc-d=0$ controls both the associativity of the tangent algebra (the WDVV equation) and the Yang--Baxter defect of the $9\times9$ matrix~\eqref{eq:Rmatrix}.

\begin{remark}[Relation to the Nichita--Parashar construction]\label{rmk:NP}
Nichita and Parashar~\cite{NP2006} constructed, for an arbitrary unital associative algebra $\cA$, the two-parameter family of coloured Yang--Baxter operators~\eqref{eq:Rop} and showed that it satisfies the quantum Yang--Baxter equation~\eqref{eq:QYBE} (Corollary~\ref{cor:associative}); that is, their construction gives the implication $\text{associativity}\Rightarrow\text{QYBE}$, for any unital associative algebra $\cA$. Theorem~\ref{thm:main} specialises this construction to the three-dimensional Dubrovin algebra $A_{(a,b,c,d)}$ and establishes a converse: for the nondegenerate family $\alpha+\beta\ne0$, the quantum Yang--Baxter equation for $R^{\mathrm{bi}}_{p;\alpha,\beta}$ is equivalent to the WDVV associativity condition $a^2-bc-d=0$. Moreover, by Section~\ref{subsec:bielliptic}, the logarithm of the associated algebraic two-valued group provides canonical Abelian spectral coordinates $U,V$ on a genus-two bielliptic curve. The essential distinction is this: the Nichita--Parashar construction gives $\cA\Rightarrow\mathrm{QYBE}$ for associative $\cA$, whereas our theorem establishes $\mathrm{WDVV}\Leftrightarrow\mathrm{QYBE}$ for the tangent Frobenius algebra and equips the resulting $9\times9$ $R$-matrix with canonical bielliptic spectral geometry.
\end{remark}

\subsection{The spectral parameter from the bielliptic Abel coordinate}\label{subsec:bielliptic}

Theorem~\ref{thm:main} holds identically in the spectral parameters. We now show that the two-valued group attached to $p$ supplies a canonical spectral parameter, which is not introduced externally but is determined by the WDVV solution itself. By~\eqref{eq:Lp}, the Abel coordinate of the spectral two-valued group at $p$ is
\begin{equation}\label{eq:abel}
U_p(X)=\sqrt{\LL_p(X)}=\int_0^{\sqrt X}\frac{dr}{\sqrt{1-br^2-2ar^4-cr^6}} .
\end{equation}

\begin{proposition}\label{prop:bielliptic}
Assume $c\neq0$ and that $P_p(\lambda)=\lambda^3-b\lambda^2-2a\lambda-c=\det(\lambda-L_{e_2}(p))$ has three distinct roots. Then the curve
\[
\Gamma_p\colon\ h^2=1-br^2-2ar^4-cr^6
\]
has genus two and is bielliptic. Its involutions $\iota_\pm(r,h)=(-r,\pm h)$ have the elliptic quotients
\[
\begin{aligned}
\Gamma_p/\iota_+&\colon\ h^2=1-bX-2aX^2-cX^3, && X=r^2,\\
\Gamma_p/\iota_-\simeq E_p&\colon\ w^2=P_p(\lambda), && \lambda=r^{-2},\ w=hr^{-3},
\end{aligned}
\]
and the second quotient is the spectral curve of the Frobenius multiplication. The holomorphic differential $dr/h$ is the pullback of $-\tfrac12\,d\lambda/w$ from $E_p$, so that $U_p$ is the Abel coordinate on $E_p$: $1/X=\wp(U_p;g_2,g_3)+b/3$.
\end{proposition}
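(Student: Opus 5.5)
The plan is to work directly with the sextic $Q(r)=1-br^2-2ar^4-cr^6$ and first settle smoothness and genus. Since $c\neq0$, $Q$ has degree exactly $6$. Its roots are $r=\pm\lambda_i^{-1/2}$, where $\lambda_1,\lambda_2,\lambda_3$ are the roots of $P_p$. To see this, note that $r^6P_p(r^{-2})=1-br^2-2ar^4-cr^6=Q(r)$. Here $P_p(0)=-c\neq0$, so every $\lambda_i$ is nonzero. Because the $\lambda_i$ are distinct and nonzero, the six values $\pm\lambda_i^{-1/2}$ are pairwise distinct. Hence $Q$ is a squarefree polynomial of degree $6$, and $h^2=Q(r)$ defines a smooth hyperelliptic curve of genus $2$ (with two points at infinity in the smooth model).

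Next I would check that $\iota_\pm$ are automorphisms, which is clear since $Q$ is even in $r$. I would then compute the two quotients through invariant functions.

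For $\iota_+$, the invariants are $X=r^2$ and $h$. These satisfy $h^2=1-bX-2aX^2-cX^3$, a cubic in $X$. Its roots are the $\lambda_i^{-1}$, which are distinct, so this is an elliptic curve. The map $(r,h)\mapsto(X,h)$ has degree $2$, so it is the quotient map.

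For $\iota_-$, the invariants are $\lambda=r^{-2}$ and $w=hr^{-3}$, which is invariant because $(-r)^{-3}(-h)=r^{-3}h$. Dividing $h^2=Q(r)$ by $r^6$ gives $w^2=r^{-6}Q(r)=P_p(\lambda)$. This is the spectral curve $E_p$ of Proposition~\ref{prop:spectral}(ii) with $\bv=e_2$, since $P_p=\chi_M$ by Lemma~\ref{lem:MN}. Again the map has degree $2$.

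Having two distinct involutions with genus-one quotients shows that $\Gamma_p$ is bielliptic. As a check, Riemann--Hurwitz gives $2=2(2\cdot1-2)+\#\mathrm{fix}$, so each involution must have $2$ fixed points. For $\iota_+$ these are the points over $r=0$. For $\iota_-$ they are the points at infinity, where $r^{-2}\to0$ corresponds to $\lambda=0$ and $w=\pm\sqrt{-c}$. I may only sketch this check.

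For the differential I will differentiate $\lambda=r^{-2}$, which gives $d\lambda=-2r^{-3}dr$. Then
\[
-\tfrac12\,\frac{d\lambda}{w}=\frac{r^{-3}dr}{hr^{-3}}=\frac{dr}{h},
\]
and $dr/h$ is holomorphic on a genus-two curve of the form $h^2=\text{sextic}$. Consequently $U_p(X)$ in~\eqref{eq:abel} equals $\int_\infty^{\lambda}(-\tfrac12)\,d\lambda/w$, the integral running from the neutral point $O=\infty$ of $E_p$ (which corresponds to $r=0$) up to $\lambda=1/X$. This is the elliptic Abel coordinate, up to the sign convention already fixed in Proposition~\ref{prop:log}(ii).

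Finally, I would invoke Proposition~\ref{prop:log}(i)--(ii) with $k=k(p)$ from~\eqref{eq:spectral-k}, so that $s=(b,-2a,c)$. That proposition gives $\lambda=\wp(U_p)+k_2/12=\wp(U_p)+b/3$, which is exactly $1/X=\wp(U_p;g_2,g_3)+b/3$ with the $g_2,g_3$ listed after~\eqref{eq:Lp}.

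The main obstacle is bookkeeping rather than substance. I need to verify that $r\mapsto r^{-2}$ really realizes the quotient by $\iota_-$ at the points at infinity and over $r=0$; one can argue birationally, since a degree-$2$ map between smooth curves that factors through the involution is the quotient. I also need to get the sign and branch of the square root consistent with~\eqref{eq:abel}.
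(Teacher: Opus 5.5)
Your proposal is correct and follows essentially the same route as the paper: you identify the roots of the sextic as $\pm\lambda_i^{-1/2}$ (via $r^6P_p(r^{-2})=1-br^2-2ar^4-cr^6$), read off the quotients from the invariant functions $(r^2,h)$ and $(r^{-2},hr^{-3})$, differentiate $\lambda=r^{-2}$ to get $d\lambda/w=-2\,dr/h$, and then quote Proposition~\ref{prop:log}, where the evenness of $\wp$ absorbs the sign. Your Riemann--Hurwitz fixed-point count and the degree-two argument identifying these maps with the quotient maps are correct additions that the paper leaves implicit.
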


\begin{proof}
The roots of $1-br^2-2ar^4-cr^6$ are $r=\pm\lambda_i^{-1/2}$, where $\lambda_i\ne0$ (as $c\ne0$) are the distinct roots of $P_p$; so the sextic has six distinct roots and $\Gamma_p$ has genus two. The functions $r^2$ and $h$ are invariant under $\iota_+$, and $\lambda=r^{-2}$, $w=hr^{-3}$ are invariant under $\iota_-$; since $w^2=h^2r^{-6}=\lambda^3-b\lambda^2-2a\lambda-c$, the second quotient is $E_p$. Finally $d\lambda/w=-2r^{-3}dr\cdot r^3/h=-2\,dr/h$. The last formula is Proposition~\ref{prop:log}(ii).
\end{proof}

On the $U$-line, which is the universal covering of $E_p$, the two-valued group becomes single-valued: by~\eqref{eq:law-log}, $U_p(Z_\pm)=U_p(X)\pm U_p(Y)$ for the two values $Z_\pm$ of $X*Y$. We now use the homogeneity of $R$. Put
\[
S_p(z)=R^{\mathrm{bi}}_{p;\alpha,\beta}(z,1),\qquad\text{so that}\qquad R^{\mathrm{bi}}_{p;\alpha,\beta}(U,V)=V\,S_p(U/V)\quad(V\neq0).
\]
Dividing~\eqref{eq:QYBE} by $VW^2$ we obtain the multiplicative form
\begin{equation}\label{eq:mult}
S_{12,p}(z_1)\,S_{13,p}(z_1z_2)\,S_{23,p}(z_2)=S_{23,p}(z_2)\,S_{13,p}(z_1z_2)\,S_{12,p}(z_1),
\end{equation}
with $z_1=U/V$, $z_2=V/W$, valid at every WDVV point. For any constant $\kappa$, the substitution $z_1=e^{\kappa(u-v)}$, $z_2=e^{\kappa(v-w)}$ turns~\eqref{eq:mult} into the standard difference form for $\widetilde R_p(u):=S_p(e^{\kappa u})$:
\begin{equation}\label{eq:difference}
\widetilde R_{12,p}(u-v)\,\widetilde R_{13,p}(u-w)\,\widetilde R_{23,p}(v-w)=\widetilde R_{23,p}(v-w)\,\widetilde R_{13,p}(u-w)\,\widetilde R_{12,p}(u-v).
\end{equation}
Taking for $u,v,w$ the Abel coordinates $U_p(X),U_p(Y),U_p(Z)$ of points of the two-valued group, the difference $u-v=U_p(X)-U_p(Y)$ is, up to sign, the Abel coordinate of the value $Z_-=\EE_p\bigl((U_p(X)-U_p(Y))^2\bigr)$ of $X*Y$. In this sense the spectral parameters of~\eqref{eq:difference} live on the universal Abelian covering of the spectral curve, and their composition is governed by the two-valued group. As $p$ moves over the Frobenius manifold, the coefficients $(b,a,c)$ evolve by the integrable system~\eqref{eq:hydro}, and with them the curves $\Gamma_p$ and $E_p$, the Abel coordinate $U_p$ and the family of $R$-matrices; the spectral-parameter Yang--Baxter family is transported by the WDVV solution itself.

\subsection{Commuting transfer matrices}

Theorem~\ref{thm:main} has the usual consequence for integrable lattice models; see, e.g.,~\cite{Faddeev1996}. Let $p$ be a WDVV point, $N\ge1$, and $v_1,\dots,v_N\in\C$. On $A_p^{(0)}\otimes A_p^{\otimes N}$ put $T(u)=R_{0N}(u,v_N)\cdots R_{01}(u,v_1)$ with $R=R^{\mathrm{bi}}_{p;\alpha,\beta}$, and let $t(u)=\tr_0T(u)$ be the transfer matrix acting on $A_p^{\otimes N}$.

\begin{corollary}\label{cor:transfer}
At every WDVV point, $[t(u),t(u')]=0$ for all $u,u'\in\C$. If $v_1=\dots=v_N=v$, then $t(v)$ is $\bigl((\beta-\alpha)v\bigr)^N$ times the cyclic shift of the $N$ tensor factors.
\end{corollary}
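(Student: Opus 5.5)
The commutativity $[t(u),t(u')]=0$ will follow from the standard RTT argument, using Theorem~\ref{thm:main}(iii) and the unitarity in Theorem~\ref{thm:main}(iv). Take two auxiliary copies $A_p^{(0)}$ and $A_p^{(0')}$ and put $T_0(u)=R_{0N}(u,v_N)\cdots R_{01}(u,v_1)$, with $T_{0'}(u')$ defined in the same way. For each site $j$, the QYBE~\eqref{eq:QYBE} in the spaces $0,0',j$ with parameters $(u,u',v_j)$ reads
\[
R_{00'}(u,u')R_{0j}(u,v_j)R_{0'j}(u',v_j)=R_{0'j}(u',v_j)R_{0j}(u,v_j)R_{00'}(u,u').
\]
Moving $R_{00'}(u,u')$ site by site through the product then gives the RTT relation
\[
R_{00'}(u,u')\,T_0(u)T_{0'}(u')=T_{0'}(u')T_0(u)\,R_{00'}(u,u').
\]
This uses only that operators acting on disjoint factors commute.

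\textbf{Step 1 (generic $u\neq u'$).} For generic $u,u'$, $R_{00'}(u,u')$ is invertible. By Theorem~\ref{thm:main}(iv) its determinant is $-A_{uu'}^3B_{uu'}^6$, which vanishes only on the two lines $\beta u=\alpha u'$ and $\alpha u=\beta u'$. Conjugating by $R_{00'}$ and taking the trace over $0\otimes0'$ gives
\[
t(u)t(u')=\tr_{00'}\bigl(T_0T_{0'}\bigr)=\tr_{00'}\bigl(R_{00'}^{-1}T_{0'}T_0R_{00'}\bigr)=t(u')t(u).
\]

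\textbf{Step 2 (all $u,u'$).} Both sides are polynomial in $u$ and $u'$, so the identity extends to all values by continuity. Degenerate choices of $\alpha,\beta$ are handled separately:
\begin{itemize}
\item If $\alpha=\beta=0$, then $R=0$ and there is nothing to prove.
\item If $\alpha=\beta\neq0$, then $R(U,V)=\alpha(U-V)R_0$, so $T(u)=\alpha^N\prod_j(u-v_j)\,T_0$ with $T_0$ constant, and $t(u)$ is a scalar polynomial times a fixed operator; commutativity is immediate.
\item If $\alpha=0\neq\beta$, then $A_{uu'}B_{uu'}=\beta^2uu'\not\equiv0$, so the generic argument of Step~1 applies.
\end{itemize}
In general $\det R_{00'}$ is a nonzero polynomial unless $\alpha=\beta=0$, so a generic-invertibility argument always works.

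\textbf{Step 3 (the shift formula).} When $v_1=\dots=v_N=v$, regularity (Theorem~\ref{thm:main}(iv)) gives $R_{0j}(v,v)=(\beta-\alpha)v\,P_{0j}$, where $P_{0j}$ is the flip of factors $0$ and $j$. Hence
\[
T(v)=\bigl((\beta-\alpha)v\bigr)^N P_{0N}\cdots P_{01}.
\]
The identity $\tr_0(P_{0N}\cdots P_{01})=$ cyclic shift is standard. To check it, write $P_{0j}=\sum_{k,l}E^{(0)}_{kl}\otimes E^{(j)}_{lk}$ and take the trace of the product of matrix units; equivalently, note that $P_{0N}\cdots P_{01}=P_{12}P_{23}\cdots P_{N-1,N}P_{0N}$ and use $\tr_0P_{0N}=\id$. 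The direction of the resulting shift follows from the ordering convention.

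The only step needing care is the invertibility in Step~1. The determinant formula in Theorem~\ref{thm:main}(iv) settles it immediately, so I expect no real obstacle beyond this bookkeeping.
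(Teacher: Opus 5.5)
Your proposal is correct and follows the paper's proof. The paper also derives the RTT relation from the QYBE on the factors $0,0',j$, inverts $R_{00'}(u,u')$ off the locus $A_{uu'}B_{uu'}=0$ using the determinant in Theorem~\ref{thm:main}(iv), takes the partial trace, extends to all $(u,u')$ by continuity, and uses regularity for the shift formula. Your case split on $(\alpha,\beta)$ is only needed for $\alpha=\beta=0$, and your explicit check that $\tr_0(P_{0N}\cdots P_{01})$ is a cyclic shift fills in details the paper leaves implicit.
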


\begin{proof}
For $j=1,\dots,N$, \eqref{eq:QYBE} on the factors $0,0',j$ with parameters $(u,u',v_j)$ gives $R_{00'}(u,u')T_0(u)T_{0'}(u')=T_{0'}(u')T_0(u)R_{00'}(u,u')$. If $R_{00'}(u,u')$ is invertible, i.e.\ $A_{uu'}B_{uu'}\neq0$ by Theorem~\ref{thm:main}(iv), taking the trace over $0$ and $0'$ gives $[t(u),t(u')]=0$; the remaining values of $(u,u')$ follow by continuity. The last statement follows from the regularity $R(v,v)=(\beta-\alpha)v\,\tau$.
\end{proof}

For $\Phi\neq0$ the transfer matrices do not commute in general; for example, a direct numerical check with $N=3$ and random parameters gives $\|[t(u),t(u')]\|$ of the order of $\|t(u)\|$.

\subsection{The nearest-neighbour Hamiltonian}

The regularity of $R$ gives, by the standard argument, an integrable spin chain with nearest-neighbour interaction on $A_p^{\otimes N}$. Define $h\in\operatorname{End}(A_p\otimes A_p)$ by
\begin{equation}\label{eq:hdensity}
h(\xi\otimes\eta)=\alpha\,\xi\eta\otimes1+\beta\,1\otimes\xi\eta ,
\end{equation}
that is, $h=(\alpha\iota_1+\beta\iota_2)\circ m$, where $m$ is the multiplication, $\iota_1(\zeta)=\zeta\otimes1$ and $\iota_2(\zeta)=1\otimes\zeta$. Put
\begin{equation}\label{eq:Hchain}
H=\sum_{k=1}^{N}h_{k,k+1},\qquad h_{N,N+1}:=h_{N,1},
\end{equation}
where $h_{k,k+1}$ acts on the $k$-th and $(k+1)$-st tensor factors.

\begin{proposition}\label{prop:hamiltonian}
Let $p$ be a WDVV point, $v_1=\dots=v_N=v$ and $(\beta-\alpha)v\neq0$. Then
\begin{equation}\label{eq:logder}
t(v)^{-1}\,\frac{dt}{du}(v)=\frac{1}{(\beta-\alpha)v}\bigl(H-N\alpha\bigr),
\end{equation}
and $[H,t(u)]=0$ for all $u\in\C$. For $\Phi\neq0$, $H$ does not commute with $t(u)$ in general.
\end{proposition}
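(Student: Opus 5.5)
The plan is the standard Baxter--Sklyanin argument for regular $R$-matrices, using the regularity in Theorem~\ref{thm:defect}(ii) and the commutativity in Corollary~\ref{cor:transfer}.

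\emph{Step 1: the derivative of $R$.} Set $R=R^{\mathrm{bi}}_{p;\alpha,\beta}$. Differentiating~\eqref{eq:Rop} in $U$ at $U=V=v$ gives
\[
\partial_UR(U,v)\big|_{U=v}(\xi\otimes\eta)=\alpha\,1\otimes\xi\eta+\beta\,\xi\eta\otimes1-\alpha\,\eta\otimes\xi .
\]
Write $\check R=\tau R$, so that $\check R(v,v)=(\beta-\alpha)v\,\id$. Then
\[
\tau\,\partial_UR(v,v)=h-\alpha\,\id ,
\]
because $\tau(1\otimes\xi\eta)=\xi\eta\otimes1$ and $\tau(\eta\otimes\xi)=\xi\otimes\eta$. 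Thus $\tau\,\partial_UR(v,v)=h-\alpha$ with $h$ as in~\eqref{eq:hdensity}.

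\emph{Step 2: the logarithmic derivative of $t$.} I write $R_{0j}(u,v)=P_{0j}\check R_{0j}(u,v)$, where $P_{0j}$ is the permutation of factors $0$ and $j$. At $u=v$ we have $T(v)=((\beta-\alpha)v)^N P_{0N}\cdots P_{01}$, and its trace over $0$ is $((\beta-\alpha)v)^N$ times the cyclic shift $C$; this is Corollary~\ref{cor:transfer}. Differentiating $T(u)$ at $u=v$ gives a sum over $j$ in which one factor $R_{0j}$ is replaced by $\partial_UR_{0j}(v,v)$. Using $P_{0j}\,\partial_UR_{0j}=\partial_U\check R_{0j}$ with $\partial_U\check R=h-\alpha$, I move the permutations through and take the trace over $0$. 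Each term becomes $((\beta-\alpha)v)^{N-1}$ times $C$ composed with $h_{j,j+1}-\alpha$, with indices read cyclically and $h_{N,N+1}=h_{N,1}$.

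The bookkeeping of which neighbour pair appears in each term is exactly the standard computation. Because the density is not symmetric under $\tau$ when $\alpha\ne\beta$, I must track the order carefully so that the result is $h_{j,j+1}$ rather than $\tau h\tau$. This is where I expect the main difficulty. I would check it on $N=2$ and adjust the convention $R_{0j}$ versus $\check R$ accordingly. Summing over $j$ gives
\[
\frac{dt}{du}(v)=((\beta-\alpha)v)^{N-1}\,C\,(H-N\alpha),
\]
and multiplying by $t(v)^{-1}=((\beta-\alpha)v)^{-N}C^{-1}$ yields~\eqref{eq:logder}.

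\emph{Step 3: commutation.} By Corollary~\ref{cor:transfer}, $t(v)$, $\frac{dt}{du}(v)$ and $t(u)$ all commute with one another. Hence $H=N\alpha+(\beta-\alpha)v\,t(v)^{-1}t'(v)$ commutes with $t(u)$ for all $u$.

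\emph{Step 4: the non-WDVV case.} When $\Phi\neq0$, the non-commutation is established by exhibiting a single instance. I take $N=3$ and specific parameter values, and check that $[H,t(u)]\ne0$, either numerically or symbolically using the defect formula of Theorem~\ref{thm:main}(ii).
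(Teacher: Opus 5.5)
Your proposal is correct and follows the paper's proof essentially step for step: $\tau\,\partial_UR=h-\alpha$, regularity plus the standard logarithmic-derivative computation, Corollary~\ref{cor:transfer} for $[H,t(u)]=0$, and an explicit (numerical) check for the case $\Phi\neq0$. The orientation you were worried about comes out right with the paper's fixed ordering $T(u)=R_{0N}\cdots R_{01}$: moving $(h-\alpha)_{0j}$ to the right past $P_{0,j-1}$ turns it into $(h-\alpha)_{j-1,j}$, so the $j$-th term is $C\,(h-\alpha)_{j-1,j}$ (indices cyclic) and no convention needs adjusting, though an $N=2$ test could not have decided this, since there $H=h_{12}+h_{21}$ is symmetric under $\tau$.
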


\begin{proof}
Since $R(U,V)$ is linear in $U$, $R(u,v)=(\beta-\alpha)v\,\tau+(u-v)D$ with $D=\partial_UR=(\alpha\iota_2+\beta\iota_1)\circ m-\alpha\tau$. For a regular $R$-matrix the standard computation~\cite{Faddeev1996} gives $t(v)^{-1}t'(v)=((\beta-\alpha)v)^{-1}\sum_k(\tau D)_{k,k+1}$, and $\tau D(\xi\otimes\eta)=\alpha\,\xi\eta\otimes1+\beta\,1\otimes\xi\eta-\alpha\,\xi\otimes\eta=(h-\alpha)(\xi\otimes\eta)$. Since $t(v)$ is invertible and all $t(u)$ commute (Corollary~\ref{cor:transfer}), $t(v)^{-1}t'(v)$ commutes with every $t(u)$. Formula~\eqref{eq:logder} and both commutation statements have also been checked numerically for $N=3,4$ (supplementary material).
\end{proof}

Because of the homogeneity of $R$, the Hamiltonian~\eqref{eq:Hchain} does not depend on $v$. In the basis $e_1,e_2,e_3$ it has a transparent interpretation. Regard $e_1$ as an empty site and $e_2,e_3$ as two species of particles. Then
\begin{align*}
h(e_1\otimes e_1)&=(\alpha+\beta)\,e_1\otimes e_1,\\
h(e_1\otimes e_i)&=h(e_i\otimes e_1)=\alpha\,e_i\otimes e_1+\beta\,e_1\otimes e_i\qquad(i=2,3),\\
h(e_i\otimes e_j)&=\alpha\,(e_ie_j)\otimes e_1+\beta\,e_1\otimes(e_ie_j)\qquad(i,j=2,3),
\end{align*}
with $e_2^2=ae_1+be_2+e_3$, $e_2e_3=ce_1+ae_2$, $e_3^2=de_1+ce_2$. Thus a single particle hops to the left with amplitude $\alpha$ and to the right with amplitude $\beta$, while two particles on neighbouring sites fuse into one particle, or annihilate, with amplitudes given by the structure constants of the Frobenius algebra $A_p$; the fused particle is placed on the left or on the right with weights $\alpha$ and $\beta$. In particular, $H$ never increases the number of particles, so it is block-triangular with respect to the particle number. Its particle-number-preserving part is an exclusion-type hopping operator that does not involve $a,b,c,d$; consequently the spectrum of $H$ (and likewise that of $t(u)$) does not depend on the point $p$, and the WDVV data enter through the fusion and annihilation terms, which determine the eigenvectors. The integrability of the chain, i.e.\ the existence of the commuting family $t(u)$, holds precisely at the WDVV points.

\section{An example: the \texorpdfstring{$A_3$}{A3} Frobenius manifold}\label{sec:A3}

We illustrate the constructions on the simplest nontrivial polynomial solution of the three-dimensional WDVV equation, Dubrovin's $A_3$ potential~\cite{Dubrovin1996}
\begin{equation}\label{eq:A3}
F=\tfrac12(t^1)^2t^3+\tfrac12t^1(t^2)^2+\tfrac14(t^2)^2(t^3)^2+\tfrac1{60}(t^3)^5,\qquad
f(x,y)=\tfrac14x^2y^2+\tfrac1{60}y^5 .
\end{equation}
Here
\[
b=0,\qquad a=y,\qquad c=x,\qquad d=y^2,
\]
so $\Phi=y^2-0\cdot x-y^2=0$ everywhere.

\paragraph{Spectral structure} The multiplication operator $L_{e_2}$ has the characteristic polynomial
\[
P(\lambda)=\lambda^3-2y\lambda-x ,
\]
so the spectral curve is $E_{(x,y)}\colon w^2=\lambda^3-2y\lambda-x$, and~\eqref{eq:spectral-k} gives $k=(0,4y,4x,4y^2)$. The spectral two-valued law is
\[
B(X,Y,Z)=\sigma_1^2-4\sigma_2+4y\,\sigma_1\sigma_3+4x\,\sigma_2\sigma_3+4y^2\sigma_3^2 .
\]
By~\eqref{eq:g2g3}, $g_2=8y$ and $g_3=4x$, so that $1/X=\wp(U;8y,4x)$: \emph{up to constant factors, the flat coordinates of the $A_3$ Frobenius manifold are the Weierstrass invariants of the spectral curve.} Accordingly,
\[
\disc P=32y^3-27x^2=\tfrac1{16}\bigl(g_2^3-27g_3^2\bigr),\qquad
j\bigl(E_{(x,y)}\bigr)=1728\,\frac{32y^3}{32y^3-27x^2},
\]
and the spectral curve degenerates exactly on the cusp $27x^2=32y^3$.

\paragraph{Logarithm} From~\eqref{eq:Lp},
\[
\begin{aligned}
\LL_{(x,y)}(X)&=\left(\int_0^{\sqrt X}\frac{dr}{\sqrt{1-2yr^4-xr^6}}\right)^2\\
&=X+\frac{2y}{5}X^3+\frac{x}{7}X^4+\frac{28y^2}{75}X^5+O(X^6),
\end{aligned}
\]
so that the coordinates of the point are read off from the logarithm: $y=\frac52\ell_2$, $x=7\ell_3$, in agreement with Proposition~\ref{prop:jets}. The bielliptic curve of Proposition~\ref{prop:bielliptic} is $\Gamma_{(x,y)}\colon h^2=1-2yr^4-xr^6$; it has genus two whenever $x\neq0$ and $27x^2\neq32y^3$.

\paragraph{Hydrodynamic system} The vector $\bu=(b,a,c)^T=(0,y,x)^T$ is an exact solution of~\eqref{eq:hydro}: $b_y=0=a_x$, $a_y=1=c_x$, $c_y=0=(a^2-bc)_x$. Its characteristic velocities are the three roots of $\lambda^3-2y\lambda-x=0$; in the real case the system is strictly hyperbolic in the region $32y^3>27x^2$ bounded by the cusp.

\paragraph{Quasi-homogeneity} The potential is quasi-homogeneous: $f(s^3x,s^2y)=s^{10}f(x,y)$. More generally, for any $(a,b,c,d)$ and $s\neq0$ the linear map $A_{(s^2a,\,sb,\,s^3c,\,s^4d)}\to A_{(a,b,c,d)}$ given by $e_1\mapsto e_1$, $e_2\mapsto s\,e_2$, $e_3\mapsto s^2e_3$ is an algebra isomorphism, and the corresponding $R$-matrices are conjugate: with $G=\operatorname{diag}(1,s,s^2)$,
\[
R^{\mathrm{bi}}_{(s^2a,sb,s^3c,s^4d);\alpha,\beta}(U,V)=(G\otimes G)^{-1}\,R^{\mathrm{bi}}_{(a,b,c,d);\alpha,\beta}(U,V)\,(G\otimes G).
\]
For the $A_3$ manifold, $(x,y)\mapsto(s^3x,s^2y)$ acts in exactly this way on $(a,b,c,d)=(y,0,x,y^2)$. Hence, for $y\neq0$ and up to conjugation, the $R$-matrix depends on the point only through the ratio $x^2/y^3$, that is, through the $j$-invariant of the spectral curve.

\paragraph{The $R$-matrix and the Hamiltonian} Substituting $(b,a,c,d)=(0,y,x,y^2)$ into~\eqref{eq:Rmatrix} gives
\[
R^{\mathrm{bi}}_{(x,y);\alpha,\beta}(U,V)=
\resizebox{0.74\textwidth}{!}{$
\begin{pmatrix}
A_{UV}&0&0&0&\gamma y\delta&\gamma x\delta&0&\gamma x\delta&\gamma y^2\delta\\
0&\alpha\delta&0&(\beta-\alpha)V&0&\alpha y\delta&0&\alpha y\delta&\alpha x\delta\\
0&0&\alpha\delta&0&\alpha\delta&0&(\beta-\alpha)V&0&0\\
0&(\beta-\alpha)U&0&\beta\delta&0&\beta y\delta&0&\beta y\delta&\beta x\delta\\
0&0&0&0&-B_{UV}&0&0&0&0\\
0&0&0&0&0&0&0&-B_{UV}&0\\
0&0&(\beta-\alpha)U&0&\beta\delta&0&\beta\delta&0&0\\
0&0&0&0&0&-B_{UV}&0&0&0\\
0&0&0&0&0&0&0&0&-B_{UV}
\end{pmatrix}$},
\]
which satisfies the QYBE~\eqref{eq:QYBE} at every point $(x,y)$. The Hamiltonian density~\eqref{eq:hdensity} has the fusion terms
\begin{align*}
h(e_2\otimes e_2)&=\alpha\,(ye_1+e_3)\otimes e_1+\beta\,e_1\otimes(ye_1+e_3),\\
h(e_2\otimes e_3)=h(e_3\otimes e_2)&=\alpha\,(xe_1+ye_2)\otimes e_1+\beta\,e_1\otimes(xe_1+ye_2),\\
h(e_3\otimes e_3)&=\alpha\,(y^2e_1+xe_2)\otimes e_1+\beta\,e_1\otimes(y^2e_1+xe_2).
\end{align*}
Thus two neighbouring particles of type $e_2$ fuse into a particle of type $e_3$ or annihilate with amplitude $y$, a pair $e_2,e_3$ fuses into $e_2$ with amplitude $y$ or annihilates with amplitude $x$, and two particles $e_3$ fuse into $e_2$ with amplitude $x$ or annihilate with amplitude $y^2$. The flat coordinates of the $A_3$ Frobenius manifold are therefore the coupling constants of the integrable chain~\eqref{eq:Hchain}.

\section{Concluding remarks}\label{sec:discussion}

The word ``bielliptic'' has appeared before in the Yang--Baxter literature: Martins~\cite{Martins2013} found integrable three-state vertex models whose weights lie on curves of genus five. In the construction of this paper:
\begin{enumerate}
\item the basic bielliptic curve $\Gamma_p$ has genus two, not five;
\item it is produced canonically by the logarithm~\eqref{eq:Lp} of the two-valued group attached to a WDVV solution;
\item one of its elliptic quotients is exactly the spectral curve $w^2=\det(\lambda-L_{e_2})$ of the Frobenius multiplication;
\item the same relation $a^2-bc-d=0$ controls both the WDVV associativity and the Yang--Baxter defect of the $9\times9$ matrix~\eqref{eq:Rmatrix}, for $\alpha+\beta\neq0$;
\item the family varies over the Frobenius manifold according to the integrable system of conservation laws~\eqref{eq:hydro}, whose characteristic velocities are the branch points of the two-valued group, and the logarithms of the two-valued groups reconstruct the WDVV potential;
\item the $R$-matrix defines an integrable nearest-neighbour chain whose fusion amplitudes are the structure constants of the Frobenius algebra, and which is integrable exactly at the WDVV points.
\end{enumerate}

As explained in Section~\ref{subsec:physical}, the algebra~\eqref{eq:table} itself, its structure constants, and the WDVV equation already have a well-established physical meaning as the primary chiral algebra of a perturbed two-dimensional TFT, realised concretely by the $A_3$ Landau--Ginzburg model and by the $\C P^2$ sigma-model (Remark~\ref{rmk:physical-realisations}). Items 1--6 above are additional structures built on top of this physical input; whether the resulting $R$-matrix and integrable chain admit their own physical (scattering or lattice-model) interpretation is left open.

Approaches based on one-valued formal groups compose spectral parameters by a group law; with a global rational affine spectral coordinate only the two-parameter family of the Todd genus is available~\cite{Hirzebruch1966,Bismuth1976,ColemanMcGuinness1991}. Two-valued groups remove this restriction: the universal symmetric $2$-algebraic two-valued group has four coefficients with one relation, and the relation is precisely the three-dimensional WDVV equation. It would be interesting to extend this picture to Frobenius manifolds of higher dimension, where the WDVV system consists of several equations, and to relate the corresponding $R$-matrices to multivalued groups of higher order~\cite{B2006,BK2026}.

\section*{Acknowledgements}
This work was supported by the Russian Science Foundation under grant no 25-11-00081,\\
https://rscf.ru/en/project/25-11-00081/

The author is grateful to M.I.~Kornev and V.N.~Rubtsov: the results of Sections~\ref{sec:twovalued}--\ref{sec:wdvv} are based on the joint works~\cite{BK2026,BKR2026}.

\section*{Declaration of competing interest}
The author declares that he has no known competing financial interests or personal relationships that could have appeared to influence the work reported in this paper.

\section*{CRediT authorship contribution statement}
\textbf{Victor M. Buchstaber:} Conceptualization, Methodology, Formal analysis, Writing -- original draft, Writing -- review \& editing.

\section*{Data availability}
No data were generated or analysed in this study. The computer algebra scripts used to verify the identities are provided as supplementary material.

\appendix
\section{Expansion in the proof of Theorem~\ref{thm:defect}}\label{app:expansion}

We apply both sides of~\eqref{eq:defect} to $x\otimes y\otimes z$ in $\mathcal F^{\otimes3}$. The first step on the left-hand side is
\[
\begin{aligned}
R_{23}(V,W)(x\otimes y\otimes z)={}&\alpha(V-W)\,x\otimes1\otimes yz+\beta(V-W)\,x\otimes yz\otimes1\\
&-(\alpha V-\beta W)\,x\otimes z\otimes y,
\end{aligned}
\]
and on the right-hand side
\[
\begin{aligned}
R_{12}(U,V)(x\otimes y\otimes z)={}&\alpha(U-V)\,1\otimes xy\otimes z+\beta(U-V)\,xy\otimes1\otimes z\\
&-(\alpha U-\beta V)\,y\otimes x\otimes z .
\end{aligned}
\]
Applying the remaining two factors and collecting terms, one finds that both sides contain the following seven terms with the same coefficients:
\[
\begin{array}{ll}
1\otimes z\otimes xy: & -\alpha^2(U-V)(U-W)(\alpha V-\beta W),\\
1\otimes yz\otimes x: & \alpha(U-W)(\alpha U-\beta V)(\alpha V-\beta W),\\
z\otimes1\otimes xy: & \alpha(\alpha-\beta)U(U-W)(\alpha V-\beta W),\\
z\otimes y\otimes x: & -(\alpha U-\beta V)(\alpha U-\beta W)(\alpha V-\beta W),\\
z\otimes xy\otimes1: & \beta(U-W)(\alpha U-\beta V)(\alpha V-\beta W),\\
yz\otimes1\otimes x: & \beta(\alpha-\beta)W(U-W)(\alpha U-\beta V),\\
yz\otimes x\otimes1: & -\beta^2(U-W)(V-W)(\alpha U-\beta V).
\end{array}
\]
With $\Delta=(U-V)(U-W)(V-W)$, $\gamma=\alpha+\beta$, $d_1=(U-W)(V-W)$ and $d_2=(U-V)(U-W)$, the terms containing triple products have the following coefficients on the left-hand side (LHS) and on the right-hand side (RHS):
\[
\begin{array}{lll}
 & \text{LHS} & \text{RHS}\\[2pt]
1\otimes1\otimes x(yz): & \alpha^2d_1(\beta U-\alpha V) & -\alpha^2d_1(\alpha U-\beta V)\\
1\otimes1\otimes (xy)z: & 0 & \alpha^2\gamma\Delta\\
1\otimes x(yz)\otimes1: & \alpha\beta d_1(\beta U-\alpha V) & -\alpha\beta d_1(\alpha U-\beta V)\\
1\otimes (xy)z\otimes1: & -\alpha\beta d_2(\alpha V-\beta W) & \alpha\beta d_2(\beta V-\alpha W)\\
x(yz)\otimes1\otimes1: & \beta^2\gamma\Delta & 0\\
(xy)z\otimes1\otimes1: & -\beta^2d_2(\alpha V-\beta W) & \beta^2d_2(\beta V-\alpha W)
\end{array}
\] Using $(\beta U-\alpha V)+(\alpha U-\beta V)=(\alpha+\beta)(U-V)$ and $(\alpha V-\beta W)+(\beta V-\alpha W)=(\alpha+\beta)(V-W)$, the differences LHS$-$RHS are $\pm\alpha^2\gamma\Delta$, $\pm\alpha\beta\gamma\Delta$, $\pm\beta^2\gamma\Delta$ with the plus sign for $x(yz)$ and the minus sign for $(xy)z$. This is the expression stated in the proof of Theorem~\ref{thm:defect}. For an associative algebra, where $x(yz)=(xy)z$, the two sides coincide term by term.


\end{document}